\definecolor{darkblue}{RGB}{0,0,128} % choose dark colors for high contrast
\definecolor{darkgreen}{RGB}{0,150,0}
\newcolumntype{C}{>{$}c<{$}}
\newtheorem{theorem}{Theorem}
\newcommand{\cE}{\mathcal{E}}
\renewcommand{\vec}[1]{{\bm{#1}}}
\newcommand{\acdedit}[1]{#1}
\begin{document}

\title{Performance of quantum error correction with coherent errors}
\author{Eric Huang}
\author{Andrew C.\ Doherty}
\affiliation{Centre for Engineered Quantum Systems, School of Physics, The University of Sydney, Sydney, Australia}
\author{Steven Flammia}
\affiliation{Centre for Engineered Quantum Systems, School of Physics, The University of Sydney, Sydney, Australia}
\affiliation{Yale Quantum Institute, Yale University, New Haven, CT 06520, USA}
\date{\today}

\begin{abstract}
We compare the performance of quantum error correcting codes when memory errors are unitary with the more familiar case of dephasing noise. 
For a wide range of codes we analytically compute the effective logical channel that results when the error correction steps are performed noiselessly. 
Our examples include the entire family of repetition codes, the 5-qubit, Steane, Shor, and surface codes. 
When errors are measured in terms of the diamond norm, we find that the error correction is typically much more effective for unitary errors than for dephasing. 
We observe this behavior for a wide range of codes after a single level of encoding, and in the thresholds of concatenated codes using hard decoders. 
We show that this holds with great generality by proving a bound on the performance of any stabilizer code when the noise at the physical level is unitary. 
By comparing the diamond norm error $D'_\diamond$ of the logical qubit with the same quantity at the physical level $D_\diamond$, we show that $D'_\diamond \le c D^d_\diamond $ where $d$ is the distance of the code and $c$ is \acdedit{a} constant that depends on the code but not on the error. 
This bound compares very favorably to the performance of error correction for dephasing noise and other Pauli channels, where an error correcting code of odd distance $d$ will exhibit a scaling $D'_\diamond \sim D_\diamond^{(d+1)/2}$. 
\end{abstract}

\pacs{} 

\maketitle

%--------------------------------------------------------------------------------------------------------------------%
\section{Introduction}
%--------------------------------------------------------------------------------------------------------------------%

Building a large-scale quantum computer will require substantial efforts to mitigate noise through the use of quantum error correction and fault tolerance. 
The fault tolerance threshold theorem~\cite{Knill1996, Aharonov1997, Kitaev1997a, Shor1996} guarantees that as long as the errors are sufficiently rare and weakly correlated, an arbitrarily long quantum computation can proceed indefinitely and with low overhead.
The exact numerical value of the threshold depends critically on the assumptions about the noise, and from the perspective of fault-tolerant quantum computing not all types of errors are equivalent. 
This is even true for uncorrelated noise, since errors such as dephasing and depolarizing noise are purely stochastic, but control errors such as unitary over- or under-rotation can add coherently. 

This distinction between stochastic and coherent errors was recognized quite early on to be important~\cite{Kitaev2002}. 
In particular, using our best theorems to date, the only known way to relate the threshold for stochastic errors to the threshold for coherent errors is to \emph{square} the stochastic value of the threshold~\cite{Aliferis2006a}. 
Thresholds quoted in the literature for stochastic-type noise range between values of about $10^{-2}$ to $10^{-4}$ depending on how generous the assumptions are on the stochastic noise and whether the threshold is a numerical estimate based on simulation~\cite{Wang2011} or a theorem based on rigorous proof~\cite{Aliferis2007} (or something intermediate). 
Not knowing if these values need to be squared therefore represents a rather large gap in our understanding of the threshold. 
The situation is complicated by the fact that the only large-scale numerical simulations that are tractable must necessarily deal with Pauli errors, for which it is known that the squaring is unnecessary~\cite{Magesan2012}. 

The role that non-Pauli errors play in the fault-tolerance threshold is therefore quite poorly understood. 
Some recent works are beginning to develop our understanding, such as the use of the so-called honest Pauli approximation~\cite{Puzzuoli2014} or recent full-scale simulations of coherent noise, using small codes~\cite{Gutierrez2016}, using tools such as tensor networks~\cite{Darmawan2017}, or in some special cases via an exact solution~\cite{Greenbaum2016, Suzuki2017, Bravyi2018}. 

The focus of this paper is on understanding the role of coherent errors in quantum error correction. 
In order to motivate our main results, it is helpful to focus the discussion on two meaningful quantities that we wish to study, the average gate infidelity $r$ and the diamond distance $D_\diamond$, defined below.
This will also help motivate the particular scaling behavior that we seek to quantify. 

We would like to compare quantities that can actually be measured in experiments directly to the fault-tolerance threshold. 
Unfortunately, here we see another large gap between what we can measure and what we can infer about the threshold. 
For example, the average gate infidelity $r(\cE)$, defined as 
\begin{align}
\label{eq:ravgdef}
	r(\cE) = 1-\int \mathrm{d}\psi \langle \psi | \cE(\psi) | \psi \rangle,
\end{align}
is routinely measured to high precision in randomized benchmarking experiments~\cite{Knill2008}. 
However, all of our provable knowledge about thresholds uses a much stricter error metric, the diamond distance $D_\diamond(\cE)$ to the identity, defined as 
\begin{align}
\label{eq:diamonddef}
	D_\diamond(\cE) = \frac{1}{2} \| \cE - \mathcal{I} \|_\diamond := \frac{1}{2} \sup_{\rho} \bigl\| \cE\otimes\mathcal{I}(\rho) - \rho\bigr\|_1
\end{align}
in terms of the Schatten 1-norm (the sum of the singular values).
The gap between these two quantities in the regime of interest can be orders of magnitude in general, a point recently emphasized by \citet{Sanders2015}. 
Unfortunately, there is no simple way to measure the diamond distance in general without doing complete process tomography~\cite{Blume-Kohout2016}. 

Recent work has sought ways to quantify the worst-case behavior needed to prove a threshold theorem using measurement methods that are preferably scalable and avoid tomography. 
Examples include gate set tomography~\cite{Blume-Kohout2016} and the unitarity~\cite{Wallman2015}, though neither method is completely scalable in contrast to randomized benchmarking. 

Complementing this line of research, and motivating the upper bound we prove below in Theorem 1, is the theoretical approach of~\cite{Kueng2016, Wallman2015b}, which seeks to classify physical noise sources in terms of ``good'' and ``bad'' noise scaling. 
A family $\{\cE_\gamma:\gamma\in[0,1]\}$ of noise models with $\cE_0=\mathcal{I}$ has ``good'' scaling if $D_\diamond(\cE_\gamma)\leq Cr(\cE_\gamma)$ and ``bad'' scaling if $D_\diamond(\cE_\gamma) \geq C'\sqrt{r(\cE_\gamma)}$ for all $\gamma\in[0,1]$ and some constants $C,C'> 0$ that are independent of $\gamma$. 
Other scalings are also possible, and could be called ``intermediate''. 
The purpose of this coarse distinction is that for a given noise channel $\cE$, if $D_\diamond(\cE) \approx r(\cE)$ then the proxy measure of $r(\cE)$ that is easily obtained via benchmarking gives a good indication of how close one might be to the threshold, while if $D_\diamond(\cE) \approx \sqrt{r(\cE)}$ then the proxy is highly misleading. 
These scaling limits are extremal~\cite{Wallman2014, Kueng2016, Sanders2015, Wallman2015b}. 
Of course, the usefulness of this distinction depends implicitly on the constants $C$ and $C'$ being relatively civilized. 
Taking, say, $C=10^{15}$ shows that all noise is trivially good until $r(\cE_\gamma)\lesssim 10^{-30}$. 
Similarly, we also need $C'$ to be sufficiently large for this distinction to be meaningful in practice. 
However, while we are always ultimately interested in absolute noise rates on single instances, hiding these constants and discussing scaling enables us to make important physical insights into the nature of noise sources and what the expected effect might be on 
quantum information. 

Illustrating the utility of this scaling behavior perspective, Ref.~\cite{Kueng2016} classifies many common qubit noise models such as dephasing, depolarizing, amplitude damping, leakage, and unitary errors according to this dichotomy. 
It is only the unitary errors and so-called ``coherent leakage'' that exhibit bad error scaling~\cite{Kueng2016}. 
By combining some knowledge of the dominant noise process with measurements of $r(\cE)$ and the aforementioned unitarity, Refs.~\cite{Kueng2016, Wallman2015b} show that it is possible to obtain pertinent information about how close one's qubits are to the fault-tolerance threshold. 

This discussion of scaling focused on the case of \emph{physical} errors, but even more desirable would be to understand the scaling of \emph{logical} error rates in quantum error correcting codes~\cite{Combes2018}. 
Motivated by the above notions of good and bad error scaling in regards to physical errors, we seek to develop an analogous understanding of how coherent and incoherent errors scale at the logical level. 
It has been observed numerically that there can be orders of magnitude difference in the logical error rates after concatenating quantum codes~\cite{Iyer2017}, so \textit{a priori} it is not clear that it is possible to obtain a simple scaling dichotomy as we seem to have in the case of physical errors. 

%--------------------------------------------------------------------------------------------------------------------%
\subsection{Summary of Results}
%--------------------------------------------------------------------------------------------------------------------%

In this paper, we focus on physical qubit noise channels with both unitary control errors and dephasing noise. 
We aim to compare the performance of ideal error correction for coherent noise channels, for which the diamond norm distance greatly exceeds the average fidelity error, with dephasing, for which they are comparable. 
The main tool is to calculate effective noise channels for the encoded qubits after error correction, following~\citet{Rahn2002}. 
\acdedit{Broadly, the results of our investigations into the performance of ideal error correction under coherent errors suggest} that simply considering diamond norm error at the physical level overstates the effect that residual coherent errors will have on logical information in a quantum computer.

In order to develop our analytical methods we begin in \cref{sec:repetition} by revisiting the case of the repetition codes studied by \citet{Greenbaum2016}. 
Although these are not proper quantum error correcting codes they can correct against $Z$-rotations and dephasing, and are simple enough to be analytically tractable for arbitrarily high distance codes. 
As has been shown previously~\cite{Iyer2017}, the performance of error correction depends very strongly on how coherent the noise process is, and is not predicted by a single figure of merit for the noise, such as average fidelity or diamond norm distance. 

One apparently new result of this analysis relates to the nature of the effective noise channels for codes where the distance is odd or even. 
If the noise process is a purely unitary rotation about $Z$ and the repetition code has odd distance we find that the effective noise channel conditioned on syndrome measurement is also unitary. 
Consequently the statistics of syndrome measurement are independent of the logical state of the code. 
For the case of even distance repetition codes, however, we find that the effective channel conditioned on the syndrome measurement corresponds to a (very) weak measurement of logical $Z$. 
As a result the syndrome measurement outcomes do depend weakly on the logical state of the qubit. 
However, the effective noise channel averaged over syndromes is pure dephasing, even when the underlying physical noise process is purely unitary. 
We have observed qualitatively similar behavior for even distance surface codes.

In \cref{sec:theoremsection} we apply these methods of calculation of effective channels to general stabilizer codes. 
Our main analytical result is \cref{thm:theorem}, which roughly states the following. 
For any $[n,k,d]$ stabilizer code, the logical diamond norm error $D'_\diamond$ after ideal error correction of a purely unitary error is bounded by $D'_\diamond \leq c_{n,k} D^d_\diamond$ in terms of the physical diamond norm error $D_\diamond$, where $c_{n,k}$ is a constant independent of the errors (but may depend on the code). 
This result does not just apply to uniform unitary noise, but is readily generalized to hold for all single qubit unitary noise if we express the bound in terms of the largest value of $D_\diamond$ across the physical qubits.
This is very favorable behavior compared to the performance of error correction for dephasing noise. 
Recall that an error correcting code of odd distance $d$ will correct $t$ dephasing errors where $d=2t+1$. 
Consequently for dephasing it is known that $D'_\diamond$ scales like $D_\diamond^{t+1}$. 
Another way to assess this comparison between coherent and dephasing noise after error correction is to express this bound in terms of the average fidelity error $r$ at the physical level $D'_\diamond \le c r^{t+1/2}$, where $c$ is a constant that depends on the code but not on the unitary error. 
(Recall that for a coherent error $ D_\diamond$ scales like  $r^{1/2}$.) 
This compares quite favorably to the well-known scaling $r^{t+1}$ for dephasing errors. 

In \cref{sec:unitarycodeperformance} we compute the effective noise channels for a range of quantum codes using an automated procedure. 
We consider unitary qubit noise and study how the performance of the code depends on the axis of rotation in the Bloch sphere. 
There are frequently large effects. 
For example, for the surface code the error correction performs much better for rotations about the Y axis, than for rotations about X and Z. 
This observation is analogous to the recently discussed behavior of the surface code under Pauli noise~\cite{Tuckett2018}.

Finally, in \cref{sec:concatcodeperformance} we study concatenated codes for noise channels that combine $Z$-rotations with dephasing using a hard decoder. 
Again we use an automated procedure to generate the effective channel for a single level of encoding. 
By regarding these effective channels as a map on noise processes it is possible to find the threshold for a hard decoder as a fixed point of the map. 
We compare the two extreme limits of unitary $Z$ rotations and pure dephasing and find that the threshold as a function of the diamond norm error is in every code we tested larger for the case of unitary errors. 

%--------------------------------------------------------------------------------------------------------------------%
\section{Noise models and effective channels}
%--------------------------------------------------------------------------------------------------------------------%

Suppose we have a qubit Hamiltonian $H$ with $H^2=I$, then coherent noise channels result from unitary noise processes of the form
\begin{align}
    U_{H,\theta} = \mathrm{e}^{-i\theta H} = \cos\theta I -i\sin \theta H.
\end{align}
Physically, such error processes arise from over or under rotation in qubit control pulses. 
Note that since we are considering qubit noise processes, the choice $H^2=I$ allows for a general \acdedit{Bloch sphere rotation axis.}  

It is frequently of interest to combine coherent and incoherent errors. 
To study this sort of noise process we will follow~\citet{Kueng2016} and specialize to rotations about the $Z$ axis of angle $\theta$ and dephasing with probability $p$. \acdedit{This simple phenomenological model allows us to make a detailed comparison between the effects of coherent and incoherent errors, but it also corresponds to the dominant noise processes in many experimental implementations of quantum computing.}
The resulting noise channel is 
\begin{align}
    \mathcal{N}(\rho)
    &= \mathrm{e}^{-i\theta Z}\left[(1-p)\rho + pZ\rho Z\right]\mathrm{e}^{i\theta Z}\\
    &= (1-x)\rho + xZ\rho Z - iy(Z\rho - \rho Z). \label{eq:noisemodel}
\end{align}
In the second expression we have used an alternative parameterization of the noise process that will greatly simplify certain calculations. 
The real parameters $x,y$ are defined as follows
\begin{align}
    x
    &= p\cos^2\theta + (1-p)\sin^2\theta \\
    y &= (1-2p)\cos\theta\sin\theta.
\end{align}
It has become common in the literature to compare a noise model such as this with \acdedit{some} Pauli channel that approximates it~\cite{Puzzuoli2014, Gutierrez2013, Gutierrez2015, Gutierrez2016}. 
For this example \acdedit{the Pauli Twirling Approximation to this error} is just to project down to the case where $y=0$. 
Thus for a given initial noise model given by $p,\theta$ the \acdedit{Pauli Twirling Approximation} is to consider instead the model with $p'=p\cos^2\theta+(1-p)\sin^2\theta$ and $\theta'=0$.

Since we are mainly interested in the scaling when error rates are small, we will often expand expressions such as $x$ and $y$ to lowest nontrivial order in $p$ and $\theta$. 
So for example, we have $x \approx p + \theta^2$ and $y \approx \theta$. 

In the following we will also occasionally be interested in the noise operations conditional on the outcome of a syndrome measurement in an error correction procedure. 
These noise processes will likewise combine unitary and dephasing error but will not necessarily preserve trace, being of the form
\begin{align}
    \mathcal{N}_{\vec{s}}(\rho)
       &= \bar{x}\rho + xZ\rho Z - iy(Z\rho - \rho Z). \label{eq:condnoisemodel}
\end{align}
with the case $\bar{x}=1-x$ preserving the trace, and $x+\bar{x}\leq 1$ corresponding to some stochastic process occurring with probability $x+\bar{x}$. 
Here the subscript $\vec{s}$ labels a particular syndrome. 

We will frequently be concerned with measures of the strength of the noise process. 
Though there are many possible choices, we will confine ourselves to two frequently used measures. 
The first is the average gate infidelity $r$ because it can be estimated accurately in randomized benchmarking experiments. 
The second is the diamond norm error $D_\diamond$ because it can be used to bound the overall error when noise processes can occur sequentially in a computation and therefore appears in the statement of fault tolerance threshold theorems. 
For our model channel both of these error metrics were calculated by~\citet{Kueng2016} for the above unitary and dephasing channel: 
\begin{align}
    r &=\frac{2x}{3} =  \frac{2}{3}\bigl[p\cos(2\theta)+\sin^2(\theta)\bigr] \approx \frac{2}{3}\left(p+\theta^2\right)\\
    D_\diamond &= \sqrt{x^2+y^2} = \frac{1}{2}\left|1-(1-2p)\mathrm{e}^{2i\theta}\right| \approx \sqrt{p^2 +\theta^2}.
\end{align}
Notice that in the limit $p \rightarrow 0$ corresponding to a unitary error, $D_\diamond = \lvert\sin\theta\rvert \approx \lvert\theta\rvert$ while $r\approx \theta^2$ so that the diamond norm distance can be much larger than the average fidelity error~\cite{Kueng2016, Sanders2015}. 
Also notice the simplifications in these formulas when expressed in terms of $x$ and $y$, and in particular that $r$ is independent of $y$, while the diamond norm distance is just the length of the vector $(x,y)$. 

We wish to study the performance of $[n,k,d]$ quantum error correcting codes that encode $k$ logical qubits in $n$ physical qubits with code distance $d$. 
As discussed in the introduction, we will focus on ideal error correction where noise processes act only on the qubit memory and the encoding and error correction are performed ideally. 
Such an ideal error correction process defines an effective channel on the logical qubits. 
We wish to build intuition about the effectiveness of error correction for coherent errors by studying these error channels analytically.
 
The effective channel on the logical qubits arises from composing the operations of encoding, noise, error correction then decoding. 
As such it is described by a completely positive trace-preserving map on the $k$ logical qubits.
It can be written in the form
\begin{align}
    \mathcal{N}_L = \mathcal{E}^\dagger \mathcal{R} \tilde{\mathcal{N}} \mathcal{E}
\end{align}
where $\mathcal{E}$ is the encoding map (an isometry), $\tilde{\mathcal{N}}$ is the noise acting on the $n$ qubits of the code, $\mathcal{R}$ is recovery by syndrome measurement with correction and $\mathcal{E}^\dagger$ reverses the encoding. 
The application of quantum operations in our notation follows the same conventions as matrix multiplication, with operations occurring earlier being written to the right of later operations.

For independent qubit noise on an $n$-qubit code the noise process acts on each physical qubit independently and  
\begin{align}
	\tilde{\mathcal{N}}=\bigotimes_{m=1}^{n}\mathcal{N},
\end{align}
where the single qubit noise process $\mathcal{N}$ is assumed to act independently and identically on each physical qubit. 
Most of our considerations can be generalized to the case of non-identical noise processes acting independently on the different physical qubits, but for now we defer a discussion of this case. 

The error correction process is made up of syndrome measurement followed by correction. 
We consider two natural cases.
First, to obtain a trace-preserving map, we average over all syndromes to obtain the error correction map $\mathcal{R}$. 
Physically this corresponds to the error process assigned by an observer external to the error correction process who is unaware of which syndrome arose. 
Alternatively we can evaluate the conditional operation where a particular syndrome ${\vec{s}}$ is specified and the recovery map $\mathcal{R}_{\vec{s}}$ applies the projection corresponding to this syndrome and then performs the subsequent correction. 
The resulting conditional noise process on the logical qubits $\mathcal{N}_{\vec{s}}$ will be completely positive but not trace preserving; it will be normalized by the probability of the specified syndrome.

Techniques for calculating the effective noise channel $\mathcal{N}_L$ are discussed at length by~\citet{Rahn2002}. 
As described there we can regard the error correction procedure as a map on qubit noise channels $\mathcal{N}\rightarrow \mathcal{N}_L$. 
Note that while the encoding, error correction and decoding operations are linear when regarded as a map on the noise process $\tilde{\mathcal{N}}$, the map $\mathcal{N}\rightarrow \tilde{\mathcal{N}}$ is a polynomial map on the matrix elements that define $\mathcal{N}$. 
Specifically, we can expand a general qubit noise process as follows
\begin{equation}
\mathcal{N}(\rho)=\sum_{P,P'}n_{PP'}P\rho P'
\end{equation}
where the sum runs over all qubit Pauli matrices. 
Then the coefficients in the corresponding expansion for either $\mathcal{N}_L$ or $\mathcal{N}_{\vec{s}}$ are $n$th-order polynomials of the coefficients $n_{PP'}$. 
In the case of the noise process \cref{eq:noisemodel} we conclude that the effective channels after error correction can be written as polynomials of the parameters $x$ and $y$. 
We will make use of this observation to simplify analytical calculations in the following section.

In the majority of our examples below we have used Mathematica to automate the computation of the effective channel $\mathcal{N}_L$ using the approach outlined by~\citet{Rahn2002}. 
The results of those calculations are discussed in detail in \cref{sec:unitarycodeperformance,sec:concatcodeperformance}. 
However, when the error correcting code and recovery map have high symmetry it is possible to compute the effective channel explicitly for unitary error processes, and we will now describe that process using the repetition codes as a specific example (this case was also considered by \citet{Greenbaum2016}). 
The techniques developed to analyze these codes will then be sufficient to allow us to place bounds on the diamond norm error after correction for general $[n,k,d]$ stabilizer codes.

%--------------------------------------------------------------------------------------------------------------------%
\section{Effective error channels for coherent errors and repetition codes}\label{sec:repetition}
%--------------------------------------------------------------------------------------------------------------------%

In this section we will describe how to calculate effective error channels for coherent error processes, using the repetition codes as a primary example. 
The high symmetry of these codes allows us to find simple closed form expressions for essentially all quantities of interest. 
Despite the fact that these are not proper quantum error correcting codes, the behavior of these simple examples is qualitatively similar to all the other (more interesting) examples that we have studied. 

The effective channels for the repetition code with odd $n$ were first calculated by \citet{Greenbaum2016}. 
We present a slightly streamlined analysis and apply it to both the case of even $n$ and the conditional noise processes $\mathcal{N}_{\vec{s}}$. 
We will also see that our techniques can be used to find the effective channel for more general codes and to obtain our main analytical result which is a bound on the diamond norm error after ideal correction for unitary errors and general stabilizer $[n,k,d]$ codes. 

Our approach involves considering the case of unitary errors initially. 
We are therefore motivated to consider correcting errors of the form $U=\exp (-i\theta H)$ using an $[n,k,d]$ error correcting code. 
So we have $n$ physical qubits each undergoing an error $U$. 
We will confine ourselves to identically distributed errors for now. 
Since $U=\cos \theta I - i\sin\theta H$ and $H^2=I$ we can think of the fundamental error process as $H$ and each qubit will either have experienced an error $H$ or not. 
We can specify all possible tensor products of $H$ by a binary vector $\vec{w}$, with ones indicating that $H$ acts on the corresponding qubit.  
The Hamming weight $w$ is just the number of ones $|\vec{w}|$ in $\vec{w}$. 
Therefore $w$ is the number of qubits that have a non-trivial error.  
We can indicate error configurations by the operators $H_{\vec{w}}=H^{\otimes \vec{w}}$. 
There are $\binom{n}{w}$ configurations of weight $w$, and we can define the sum over these error configurations as follows $E_w=\sum_{\vec{w}:|\vec{w}|=w} H_{\vec{w}}$. 
So the overall $n$-qubit error operator is as follows 
\begin{equation}
\tilde{U}= [\exp (i\theta H)]^{\otimes n}= [\cos \theta I - i\sin\theta H]^{\otimes n} = \sum_{w=0}^n(-i\sin\theta)^w  (\cos\theta)^{n-w}  E_w. \label{eq:noise}
\end{equation}
This expression for the overall unitary error as a sum over Hamming weights holds for any independent unitary error regardless of the error correcting code of interest. 
In the following it will be convenient to define the function
\begin{equation}
f_w(\theta)=(-i\sin\theta)^w  \cos^{n-w}(\theta) .
\end{equation}
Notice that these functions are increasingly small for high weight. 
It is easy to see that $\lvert f_w(\theta)\rvert \leq \lvert\sin\theta\rvert^w\lvert\cos(\theta)\rvert^{n-w}\leq D_\diamond^w$ where $D_\diamond$ is the diamond norm error of the unitary noise.

At this point we have characterized the noise process $\tilde{\mathcal{N}}$ on the $n$ physical qubits. 
The next step is to calculate the effect of error recovery $\mathcal{R}$. 
This will involve measuring the syndrome, and performing a corresponding correction. 
As a result $\mathcal{R}$ involves a sum over the $2^{n-k}$ syndromes and it will not be possible to find a closed form expression for a general code. 
In this work we have used automated procedures to handle examples with small $k$. 
Moreover, the codes of most interest possess high symmetry, and this enhanced symmetry can be used to find closed form expressions in a number of interesting cases. 

In order to determine the syndrome that will arise from stabilizer measurement, we now replace the sum over error weights in \cref{eq:noise} by a sum over syndromes. 
At this point we have to take advantage of the specific properties of the code. 
We will focus for the moment on a repetition code with an odd number $n=2t+1$ of physical qubits. 
The stabilizer generators of the code are $X_iX_{i+1}$ for $i=1,...n-1$. 
There is one encoded qubit with logical operators $\bar{X}=X^{\otimes n}$ and $\bar{Z}=Z^{\otimes n}$. \acdedit{Recall that logical operators of stabiliser codes are defined only up to multiplication by stabilisers. So we could equally have chosen $\bar{X}$ to be $X\otimes I^{\otimes n-1}$, but we make this choice to emphasize the permutation symmetry of the code.}
While this is only technically a quantum error correcting code, as it has distance $d=1$, it does have distance $n$ for $Z$-errors. 
Consequently it should be effective in correcting unitary errors when $H=Z$. 
We will now specialize to that case. 
Notice that this corresponds to our noise model \cref{eq:noisemodel} with $p=0$, or equivalently $x=\sin^2\theta$ and $y=\sin\theta\cos\theta$.

Since the code can correct $t$ $Z$-errors, each of the $2^{n-1}$ error configurations $Z_{\vec{w}}$ with weight less than or equal to $t$ results in a unique syndrome. 
For each error configuration $\vec{w}$ with weight $w\leq t$ the weight $n-w$ error configuration $Z_{\vec{w}}\bar{Z}$ results in the same syndrome. 
Together these account for all $2^n$ error configurations. 
We can rewrite the error unitary (\ref{eq:noise}) as follows
\begin{equation}
\tilde{U}= \sum_{w=0}^t \sum_{\vec{w}:|\vec{w}|=w} Z_{\vec{w}}[f_w(\theta)\bar{I}+f_{n-w}(\theta)\bar{Z}]\label{eq:syndromes}.
\end{equation}
Each term in the sum over $\vec{w}$ results in a unique syndrome and so the syndrome measurement step projects onto a single value of $\vec{w}$. 
The vector $\vec{w}$, which has weight less than or equal to $t$, specifies the correction $Z_{\vec{w}}$. 
Consequently when the syndrome requires a correction of weight $w$ the logical qubit undergoes the effective channel
\begin{equation}
\mathcal{N}_{Lw}(\rho_L) = K_w \rho_L K_w^\dagger \label{eq:conditionalchannel}
\end{equation}
where 
\begin{equation}
K_w= f_w(\theta)\bar{I}+f_{n-w}(\theta)\bar{Z}. \label{eq:Kraus}
\end{equation}
It is straightforward to show that $K_w^\dagger K_w= \bigl(|f_w|^2+|f_{n-w}|^2\bigr)\bar{I}$ and therefore $K_w$ is proportional to a unitary. 
The normalization relates to the probability of the syndrome, and for small $\theta$ this is approximately $\theta^{2w}$. 
The unitary is a rotation about $Z$ and the rotation angle is approximately $(-1)^{t-w}\theta^{2(t-w)+1}$ in this limit. 
Notice that this rotation angle is much larger for the less likely syndromes; we will discuss this further in a subsequent section. 
The fact that the effective channel only depends on the weight of the correction is a consequence of the permutation symmetry of the code. 
For general noise processes, the effective channel conditioned on a given syndrome is analyzed by \citet{Chamberland2017} as a generalization of the approach of~\citet{Rahn2002}.

We can now evaluate the overall noise map as follows
\begin{equation}
\mathcal{N}_{L}(\rho) = \sum_{w=0}^t \binom{n}{w} K_w \rho_L K_w^\dagger =(1-x')\rho + x'\bar{Z}\rho \bar{Z} - iy'(\bar{Z}\rho - \rho \bar{Z}), \label{eq:newmap}
\end{equation}
with
\begin{align}
    x' &= \sum_{w=0}^{t}\binom{n}{w}x^{n-w}(1-x)^w =\sum_{w'=t+1}^{n}\binom{n}{w'}x^{w'}(1-x)^{n-w'} \\
    y' &=y^n \sum_{w=0}^t \binom{n}{w}(-1)^{t-w} =\binom{2t}{t}y^{2t+1}. \label{eq:telescope}
\end{align}
The final expression for $x'$ is a simple reorganization of the sum to make it clear that $x'\simeq \binom{n}{t+1}x^{t+1}$ for small $x$. 
It is clear that for small $x$ the sum for the effective error rate $x'$ is dominated by the terms arising from syndromes with $w=t+1$, the lowest weight uncorrected errors. 
(The repetition code corrects $t$ errors with $t\simeq n/2$ so the combinatorial factor $\binom{n}{t+1}$ attains its maximum possible value. 
By contrast, for surface codes with large distance, for example, the lowest weight errors do not necessarily dominate the effective error rate for realistic values of the parameters.) 
The final expression for $y'$ arises from applying Pascal's rule to consecutive terms in the sum~\cite{Greenbaum2016}. 
It is clear from this expression that in contrast to $x'$, all Hamming weights contribute to the expression for $y'$ at the same order $y^{2t+1}$. 
Thus any characterization of the average error will get contributions from all syndromes, even those that correspond to very high weight errors and arise with very low probability. 
However the combinatorial factors in the sum in \cref{eq:telescope} means that the sum is largely determined by terms where $w$ is not too different from $t$. 

\acdedit{Inpsecting the sum in \cref{eq:telescope}, it is clear that the value of $y'$ is reduced due to the fact that the sign of the rotation angle of the effective channel for each syndrome oscillates with the Hamming weight. By reducing the magnitude of $y'$ this cancellation reduces the coherence of the effective channel. It is straightforward to assess the extent to which this cancellation is reducing the coherence of the average channel. If we remove the factor $(-1)^{t-w}$ from the sum in \cref{eq:telescope}  we would obtain the value $2^ny^{2t+1}$.  
So this quantity scales with $y$ as before but it is larger than $y'$ by a factor roughly equal to $\sqrt{\pi t}$ in the limit of large $t$.}

We have established this identity only for the unitary case where $x=\sin^2\theta$ and $y=\sin\theta\cos\theta$. 
However we will show below that this formula holds for all the combinations of unitary and dephasing errors in \cref{eq:noisemodel}. 
Notice that since $x'$ depends on $x$ alone, the \acdedit{Pauli Twirling Approximation} to the effective channel can be found starting from the \acdedit{Pauli Twirling Approximation} at the physical level. 
This seems unlikely to be a general property of these noise maps.

We can also write the conditional channel (\ref{eq:conditionalchannel}) for a given syndrome $\vec{w}$. 
Note that
\begin{equation}
\mathcal{N}_{L\vec{w}}(\rho_L) =  K_w \rho_{L}K_w^\dagger =\bar{x}_\vec{w}'\rho_L + x_\vec{w}'\bar{Z}\rho_L \bar{Z} - iy_\vec{w}'(\bar{Z}\rho_L - \rho_L \bar{Z}) \label{eq:conditionalchannel2}
\end{equation}
where
\begin{align}
\bar{x}'_\vec{w} &= x^{w}(1-x)^{n-w}  \\
    x'_\vec{w} &= x^{n-w}(1-x)^w\\
    y'_\vec{w} &=(-1)^{t-w}y^n .
\end{align}
Note that all of these are proportional to unitary processes satisfying $y'^2_\vec{w}=x_\vec{w}'\bar{x}'_\vec{w}$ since the underlying process is unitary and has $y^2=x(1-x)$.

The overall probability of this syndrome is
\begin{equation}
p_\vec{w}= x^w(1-x)^{n-w}+x^{n-w}(1-x)^{w} ,
\end{equation}
and there are $\binom{n}{w}$ syndromes that have a correction operation of weight $w$. 
Since the probability of each weight depends on $x$ alone, it can be computed using the Pauli approximation to \cref{eq:noisemodel}. 
Also the probability of a given weight arising is independent of the logical state. 
We will see that these properties do not hold for all codes. 

As for the unconditional effective channel, we have so far explained how to establish these results with unitary errors $p=0$, but we have rewritten these expressions in terms of the parameters $x$ and $y$ such that they hold for all instances of the noise model \cref{eq:noisemodel}. 
We will establish these formulas in the general case in the next subsection. 

%--------------------------------------------------------------------------------------------------------------------%
\subsection{Effective channels with dephasing as well as unitary errors} \label{sec:fullmodel}
%--------------------------------------------------------------------------------------------------------------------%

Given the calculations so far it is straightforward to see that for the general case of the noise model of \cref{eq:noisemodel} we have
\begin{equation}
\tilde{\mathcal{N}}(\rho)= \sum_{\vec{w}',\bar{\vec{w}},\bar{\vec{w}}'} p^{w'}(1-p)^{n-w'}f_{\bar{w}}(\theta)f^*_{\bar{w}'}(\theta)Z_{\vec{w}'+\bar{\vec{w}}}\rho Z_{\vec{w}'+\bar{\vec{w}}'}.
\end{equation}
Here and elsewhere addition of the binary vectors $\vec{w}$ is modulo two.

If we consider a syndrome measurement with outcome $\vec{w}$ then a projection onto the corresponding stabilizer subspace just picks out the four terms of this sum where both $\vec{w}'+\bar{\vec{w}}$ and $\vec{w}'+\bar{\vec{w}}'$ are equal to either $\vec{w}$ or $\vec{n}-\vec{w}$. 
($\vec{n}$ is the $n$-component binary vector with Hamming weight $n$.) The corresponding correction operator is $Z_{\vec{w}}$ as before, and this results in a conditional effective channel of the form of \cref{eq:conditionalchannel2} as expected. 

We now just need to evaluate the parameters $\bar{x}'_\vec{w},x'_\vec{w},y'_\vec{w}$. 
Considering first the coefficient of $\rho_L$ we find
\begin{eqnarray}
\bar{x}'_\vec{w}&=& \sum_{\vec{w}',\bar{\vec{w}}=\vec{w}-\vec{w}'} p^{w'}(1-p)^{n-w'}|f_{\bar{w}}(\theta)|^2 \nonumber \\& =&\sum_{w'=0}^n\sum_{\tilde{w}=0}^{w'}\binom{w}{\tilde{w}}\binom{n-w}{w'-\tilde{w}} p^{w'}(1-p)^{n-w'}|f_{w+w'-2\tilde{w}}(\theta)|^2 \nonumber \\
&=& x^w(1-x)^{n-w}. \label{eq:computation}
\end{eqnarray}
In the first equality we sum over all binary vectors $\vec{w}'$ and $\bar{\vec{w}}=\vec{w}-\vec{w}'$. 
Consider a fixed $\vec{w}'$. 
Let $\tilde{w}$ be the number of locations where both $\vec{w}$ and $\vec{w}'$ have ones. 
There are $w'-\tilde{w}$ locations where $\vec{w}'$ has a one and $\vec{w}$ has a zero. 
Consequently $\bar{\vec{w}}$ has weight $|\bar{\vec{w}}|=w-\tilde{w}+(w'-\tilde{w}) = w+w'-2\tilde{w}$. 
For each $0\leq \tilde{w}\leq w'$ there are $\binom{w}{\tilde{w}}\binom{n-w}{w'-\tilde{w}}$ distinct choices of $\vec{w}'$ that have this value of $\tilde{w}$. 
This establishes the second equality above. 
The final equality is most easily seen by observing that $x=(1-p)\sin^2\theta+p\cos^2\theta$ and $1-x=(1-p)\cos^2\theta+p\sin^2\theta$ and then applying the binomial expansion to $x^w$ and $(1-x)^{n-w}$.

The same procedure can be used to compute the other parameters. 
The case of $x'_\vec{w}$ is precisely the same as for $\bar{x}'_\vec{w}$. 
For $y'_\vec{w}$ we find
\begin{eqnarray}
-i y'_\vec{w}&=& \sum_{\vec{w}',\bar{\vec{w}}=\vec{n}-\vec{w}-\vec{w}',\bar{\vec{w}}'=\vec{w}-\vec{w}'} p^{w'}(1-p)^{n-w'}f_{\bar{w}}(\theta) f^*_{\bar{w}'}(\theta) \nonumber \\& =&\sum_{w'=0}^n\sum_{\tilde{w}=0}^{w'}\binom{w}{\tilde{w}}\binom{n-w}{w'-\tilde{w}} p^{w'}(1-p)^{n-w'}f_{n+2\tilde{w}-w-w'}(\theta) f^*_{w+w'-2\tilde{w}}(\theta) \nonumber \\
& =& -i(-1)^{t-w}[\sin\theta\cos\theta]^n\sum_{w'=0}^n\sum_{\tilde{w}=0}^{w'}\binom{w}{\tilde{w}}\binom{n-w}{w'-\tilde{w}} (-p)^{w'}(1-p)^{n-w'} \nonumber \\
&=& -i(-1)^{t-w}y^n. 
\end{eqnarray}
One way to see the final equality, is to note that $y=\cos\theta \sin\theta[(1-p)-p] $ and apply the binomial expansion to $y^w$ and $y^{n-w}$. 

This establishes the claimed result for the conditional effective channels. 
Averaging over syndromes recovers the result for the unconditional effective channel.

%--------------------------------------------------------------------------------------------------------------------%
\subsection{Repetition codes with even distance}
%--------------------------------------------------------------------------------------------------------------------%

So far we have considered the case where $n=2t+1$ is odd. 
It is also interesting to look at the case of an even distance code where $n=2t+2$ physical qubits. 

We will note two interesting features of this case. 
First, the unitary part of the logical error channel ($y'$) vanishes after a single round of error correction, so that the effective logical channel is pure dephasing. 
Second, the probabilities of various syndrome measurement outcomes will depend on the initial logical state (albeit weakly). 

The stabilizer generators of the code are $X_iX_{i+1}$ for $i=1,...n-1$. 
There is one encoded qubit with logical operators $\bar{X}=X\otimes I^{\otimes n-1}$ and $\bar{Z}=Z^{\otimes n}$. 
For the odd repetition codes, each syndrome indicates either a weight $w$ or a weight $n-w$ error and the symmetric decoder corrects the lowest weight error. 
When $n$ is even there is no good way of correcting the Hamming weight $t+1$ errors, and we just need to decide on some procedure that doesn't lower the symmetry of the code. 
In this case for each of the $w=t+1$ syndromes we will choose one of the two corresponding weight $t+1$ errors as the correction operation. 
So for example one of the pair will act nontrivially on the first qubit, and we will choose this as the correction operator. 
This procedure results in a state in the code space and preserves the overall structure and symmetry of the decoder. 
Looking at \acdedit{the} sum over the syndromes in \cref{eq:syndromes} we should therefore treat the $w=t+1$ term separately as follows:  
\begin{equation}
\tilde{U}= \sum_{w=0}^t \sum_{\vec{w}:|\vec{w}|=w} Z_{\vec{w}}[f_w(\theta)\bar{I}+f_{n-w}(\theta)\bar{Z}] +\sum_{\vec{w}:|\vec{w}|=t+1, w_1=1} f_{t+1}(\theta)Z_{\vec{w}}[\bar{I}+\bar{Z}] \label{eq:syndromeseven}.
\end{equation}

It is already clear from the expression for the noise operator in terms of the syndromes that the effective channel conditioned on a stabilizer measurement is quite different in the even distance case as compared to the odd distance case. 
We can read off from \cref{eq:syndromeseven} that a weight $t+1$ syndrome results in a projection on a $\bar{Z}$ eigenstate of the logical qubit, which would of course totally destroy the quantum information. 
In contrast for odd distance codes the conditional channel was unitary. 
For the other syndromes we have as before a conditional channel given by \cref{eq:conditionalchannel} with the Kraus operator $K_w$ given by \cref{eq:Kraus}. 
However in the even distance case it is easy to show that the Kraus operator is not unitary and in fact $K_w^\dagger=\pm K_w$. 
For small $\theta$ we find $K_w\approx (-i)^w\theta^w [\bar{I}-(-1)^{t-w}\theta^{2(t-w)+2} \bar{Z}]$. 
Each of these Kraus operators corresponds to performing a weak measurement of $\bar{Z}$ with a measurement outcome that depends on the weight of the stabilizer. 

An important takeaway is this: 
whereas previously the syndrome probabilities were completely independent of the logical state, this is no longer true for the even distance codes. 
For these codes the unitary error is generating some entanglement between the logical and stabilizer qubits and the stabilizer measurement outcomes can depend on the initial logical state.

We can now evaluate the overall noise map for even distance repetition codes, which has the same form as \cref{eq:newmap} but with the parameters as follows
\begin{align}
    x' &= \sum_{w=0}^{t}\binom{n}{w}x^{n-w}(1-x)^w + \frac{1}{2}\binom{n}{t+1}x^{t+1}(1-x)^{t+1}\\ &= \frac{1}{2}\binom{n}{t+1}x^{t+1}(1-x)^{t+1} +\sum_{w'=t+2}^{n}\binom{n}{w'}x^{w'}(1-x)^{n-w'} \\
    i y' &=y^n \left[\sum_{w=0}^t \binom{n}{w}(-1)^{t-w} - \frac{1}{2} \binom{n}{t+1} \right]=0.
\end{align}
As we argued in the odd case this expression holds for all initial $x$ and $y$. 
A cancellation arises such that the unitary component of the error is totally removed after a single round of error correction. 
Once again this cancellation depends on the fact that all Kraus operators contribute at the same order, so in practice for a large $t$ code this involves averaging over very rare events. 

As for the case of odd distance repetition codes, it is possible to generalize the the conditional effective channels to the general noise model of \cref{eq:noisemodel}. 
We won't write the formulas for $\bar{x}'_w,x'_w,y'_w$ explicitly. 
However note that the overall probability for a syndrome $\vec{w}$ is 
\begin{equation}
p_\vec{w}= x^w(1-x)^{n-w}+x^{n-w}(1-x)^{w}-2(-1)^{t-w}y^n\langle \bar{Z}\rangle.
\end{equation}
In contrast to the case of odd distance, this probability depends on $y$ as well as $x$ and so the behavior is distinct from that for the Pauli channel approximation. 
It also depends on the logical state. 
Measuring the syndrome implements a weak measurement of $\bar{Z}$ and this is reflected in the syndrome outcome probabilities. 
Note that for low $w$ this dependence is rather weak.

%--------------------------------------------------------------------------------------------------------------------%
\subsection{Non-uniform single-qubit errors} \label{sec:nonuniform}
%--------------------------------------------------------------------------------------------------------------------%

These considerations work equally well when the unitary error on each physical qubit can be distinct. 
The generalization from the case of uniform unitary errors can largely be handled by modifying the notation. 
We will consider the case of rotations about the $z$-axis by angles $\theta_j$ that differ between the qubits.

Stepping through the calculations above to find the effective channel for a given syndrome we first write the error process on the $n$ qubits as a sum over syndromes as follows
\begin{equation}
\tilde{U} =\bigotimes_{i=1}^n U(\theta_i) = \sum_{w=0}^t \sum_{\vec{w}:|\vec{w}|=w} Z_{\vec{w}}[f_\vec{w}(\bm{\theta})\bar{I}+f_{\vec{n}-\vec{w}}(\bm{\theta})\bar{Z}].
\end{equation}
We have defined the following function 
\begin{equation}
f_\vec{w}(\bm{\theta})=\prod_{j=1}^n(-i\sin\theta_j)^{w_j}  \cos^{1-w_j}(\theta_j) .
\end{equation}
Consequently the Kraus operator corresponding to the syndrome $\vec{w}$ is as follows
\begin{equation}
K_\vec{w}= f_\vec{w}(\bm{\theta})\bar{I}+f_{\vec{n}-\vec{w}}(\bm{\theta})\bar{Z}. \label{eq:Kraus2}
\end{equation}
Now that the unitary errors are not uniform, the effective channels for each syndrome are distinct. 
The number of distinct effective channels has risen from $t$ to $2^{n-1}$. 

We can write these effective channels in terms of the parameters $x_j$ and $y_j$ as follows
\begin{align}
\bar{x}'_\vec{w} &= \prod_{j=1}^n x_j^{w_j}(1-x_j)^{1-w_j}  \\
    x'_\vec{w} &= \prod_{j=1}^n x_j^{1-w_j}(1-x_j)^{w_j}  \\
    y'_\vec{w} &=(-1)^{t-w}\prod_{j=1}^ny_j .
\end{align}

As in our earlier examples these expressions have been established only for unitary rotations about the $z$ axis but hold also for the model with non-uniform dephasing included. 
In this case the generalization can be established more straightforwardly. 
The map from the physical noise process, specified by the parameters $\{x_j\}$, $\{y_j\}$ and $\{\bar{x}_j=1-x_j\}$ to the effective logical noise process for the syndrome $\vec{w}$, given by $\bar{x}'_\vec{w} ,x'_\vec{w}, y'_\vec{w} $ is a polynomial where each term is $n$-th order overall and first order in $\bar{x}_j,x_j,y_j$ for each $j$. 
These polynomials are fixed uniquely by considering just the unitary case. 
We were not able to use this argument previously because in the unitary case $y^2=\bar{x}x$ which generally leads to an ambiguity in generalizing from the unitary to the general case using this method. 
This ambiguity is avoided here because no quadratics in the parameters $\bar{x}_j,x_j,y_j$ arise. 
Having established these results for non-uniform noise without explicitly considering dephasing we could re-obtain the results of \cref{sec:fullmodel} by specializing to the uniform case and then averaging over syndromes.

%--------------------------------------------------------------------------------------------------------------------%
\section{Effective channels with coherent errors for general stabilizer codes}\label{sec:theoremsection}
%--------------------------------------------------------------------------------------------------------------------%

The same approach can be used to find effective channels for more general stabilizer codes, including the channel conditioned on a given syndrome. 
The main ideas have already arisen in the context of the repetition codes, but for a general stabilizer code there is a greater overhead of notation and some technical details. 

The main result of this subsection is that the diamond distance error for unitary noise processes after ideal correction scales like $\theta^d$ where $d$ is the distance of the code when a minimum distance decoder is used. 
Readers who are more interested in the specific effective channels may wish to skip ahead to the next section.

As discussed in the appendix, a general product of Pauli matrices on $n$ qubits can be specified by a binary vector $\vec{b}$ of length $2n$. 
Ones in the first $n$ entries specify the locations of $X$ operators, while the ones in the second group of $n$ entries specify the locations of $Z$ operators. 
A qubit acted on by both an $X$ and a $Z$ is understood to be acted on by $Y$. 
We will use the notation $P_{\vec{b}}$ to indicate the resulting product of Paulis. 
If $m_x$, $m_y$ and $m_z$ are the numbers of $X$, $Y$ and $Z$ operators in the product, then the Hamming weight is $w=m_x+m_y+m_z$. 
We refer the reader to \cite{Nielsen2000} for further information on stabilizer quantum error correcting codes. 
The discussion in the appendix provides a brief summary and fixes our notation.  

We now wish to consider general unitary rotations, so we have $H=\alpha_x X + \alpha_y Y + \alpha_z Z$. 
Since $H^2=I$ we have $\alpha_x^2+\alpha_y^2+\alpha_z^2=1$. 
We can therefore rewrite our overall unitary on $n$ qubits (\ref{eq:noise}) as follows
\begin{equation}
\tilde{U}=\sum_{\vec{b} } f_w(\theta)\alpha^\vec{m} P_\vec{b} \label{eq:fullerror}
\end{equation}
where we have used the shorthand notation $\alpha^\vec{m} =\alpha_x^{m_x}\alpha_y^{m_y}\alpha_z^{m_z}$.

As before we can generalize to the case of non-uniform unitary noise straightforwardly. 
We obtain
\begin{equation}
\tilde{U}=\sum_{\vec{b} } g_\vec{b}(\bm{\theta},\bm{\alpha}) P_\vec{b},
\end{equation}
with the definition
\begin{equation}
g_\vec{b}(\bm{\theta},\bm{\alpha})=\prod_{j=1}^n(-i\sin\theta_j)^{w_j(\vec{b})}  \cos^{1-w_j(\vec{b})}(\theta_j)\alpha_{xj}^{m_{xj}(\vec{b})}\alpha_{yj}^{m_{yj}(\vec{b})}\alpha_{zj}^{m_{zj}(\vec{b})} .
\end{equation}
Both the rotation angles $\theta_j$ and the rotation axes, given by $\alpha_{xj},\alpha_{yj},\alpha_{zj}$, now change from qubit to qubit. 
We have used a notation such that $m_{xj}(\vec{b})$ is one if $P_\vec{b}$ has an $X$ for qubit $j$ and is zero otherwise.

Consider now a stabilizer code. 
The stabilizer generators are signed products of Pauli matrices $\pm P_{\vec{g}_i}$  where the binary vectors $\vec{g}_i$ satisfy certain constraints, for example that the stabilizer generators all commute. 
A general member of the stabilizer group $S_{\vec{t}}$ is described by the length $n-k$ binary vector $\vec{t}$, where ones in the vector signify that the corresponding stabilizer generator is part of the product that results in $S_{\vec{t}}$. 
The code has $k$ logical $X$ operators $\bar{X}_i= P_{\vec{x}_i}$ and $k$ logical $Z$ operators $\bar{Z}_i= P_{\vec{z}_i}$. 
These logical operators are defined only up to multiplication by elements of the stabilizer group, but here we will pick particular coset representatives. 
We can describe products of logical Pauli operators $L_{\vec{l}}$  by a length $2k$ binary vector $\vec{l}=(\vec{l}_x,\vec{l}_z)$, in analogy to the Pauli matrices. 

Now consider implementing the error correction with a specific decoder. 
First the stabilizers are measured, and there are $2^{n-k}$ syndromes. 
For a given Pauli error $P_{\vec{b}}$, we associate the syndrome $\vec{y} = \vec{s}(\vec{b})$. 
We will choose a fiducial Pauli error $E_\vec{y}$ having the lowest possible weight among all errors that lead to that syndrome. 
There may be more than one choice for some syndromes, particularly for degenerate codes, and in that case we choose the lexicographically first choice of Pauli. 
We therefore define the \textbf{symmetric decoder} (or \textbf{minimum weight decoder}) to be the decoder that chooses the following correction, 
\begin{align}
	E^\dagger_\vec{y} = P_{\vec{r}(\vec{y})}, \qquad \vec{r}(\vec{y}) = \mathrm{lex} \arg\min_{\vec{b}} \Bigl\{  |\vec{b}| : \vec{s}(\vec{b})=\vec{y}\Bigr\},
\end{align}
where recall that $|\vec{b}|$ is the Hamming weight of $\vec{b}$. We note that this decoder is generally not optimal, so some decoders could potentially do even better than suggested by our analysis.

Later in the paper we will discuss two slight variants of this decoder. 
In \cref{sec:symmetric} we consider simplifications to the effective noise channels that arise from symmetries of the code. 
In that case choosing the correction to be the lexicographically first choice of Pauli with the required syndrome and lowest possible weight can result in a correction procedure that has less symmetry than the code itself. 
An example is the Steane code, where certain two qubit errors can be corrected. 
If these are chosen to be the two-qubit Paulis made up of one $X$ and one $Z$ error then the effective noise channels simplify. 
In such examples it is possible to specify the corrections $E_{\vec{y}}$ out of the set of possible lowest weight corrections for that syndrome so as to preserve the symmetry of the code. 
A second modification of the decoder is used in \cref{sec:concatcodeperformance} where we consider noise processes that are combinations of dephasing and $Z$-rotations. 
Since for these noise processes only $Z$ errors can occur, we can improve the performance of the code by choosing $E_\vec{y}$ to be the lowest weight product of Pauli $Z$ errors that could produce the syndrome $\vec{y}$. 
If no product of $Z$s results in the syndrome we can, as before, choose the lowest weight correction, however this syndrome will not arise.

Having specified a decoder we know that for any Pauli matrix $P_\vec{b}$ there is a unique syndrome $\vec{s}(\vec{b})$ and correction $E^\dagger_{\vec{s}(\vec{b})}$ such that $E^\dagger_{\vec{s}(\vec{b})}P_\vec{b}$ commutes with the stabilizer group. 
Thus $E^\dagger_{\vec{s}(\vec{b})}P_\vec{b}$ is in the normalizer group that is generated by the logical operators and the stabilizer group, see for example~\cite{Nielsen2000}. 
Thus there exists a logical operator $L_{\vec{l}(\vec{b})}$, stabilizer $S_{\vec{t}(b)}$, and phase factor $\eta_{\vec{b}}$ equal to $\pm1$ or $\pm i$ such that $E^\dagger_{\vec{s}(\vec{b})}P_\vec{b}=\eta_\vec{b}L_{\vec{l}(\vec{b})}S_{\vec{t}(\vec{ b})}$. 
Consequently every Pauli matrix can be written uniquely in the form $P_\vec{b}=\eta_\vec{b}E_{\vec{s}(\vec{b})}L_{\vec{l}(\vec{b})}S_{\vec{t}(\vec{ b})}$. 
This mapping is a property of the chosen error correcting code, encoding unitary and decoding scheme. 
Equally for any choice of syndrome $\vec{s}$, logical operator $\vec{l}$, and stabilizer element $\vec{t}$ their product is a Pauli matrix up to a phase so that there is a unique $\vec{b}(\vec{s},\vec{l},\vec{t})$ and $\eta_{\vec{s},\vec{l},\vec{t}}$ such that $P_\vec{b}=\eta_{\vec{s},\vec{l},\vec{t}}E_{\vec{s}}L_{\vec{l}}S_{\vec{t}}$. 
So the mapping between $\vec{b}$ and $(\vec{s},\vec{l},\vec{t})$ is one-to-one and onto.
Note that the $4^n$ Pauli products $P_\vec{b}$ are accounted for since there are $2^{n-k}$ syndromes, $4^k$ logical operators and $2^{n-k}$ stabilizer elements, which can together result in $2^{n-k}\cdot 4^k \cdot 2^{n-k}=4^n$ distinct products.

With these preliminaries completed, we can now rewrite the expression for $\tilde{U}$ from \cref{eq:fullerror} as a sum over syndromes, rather than a sum over products of Paulis: 
\begin{equation}
\tilde{U}= \sum_{\vec{s}} \sum_{\vec{l}}\sum_{\vec{t}} f_{w(\vec{s},\vec{l},\vec{t})}(\theta)\alpha^{\vec{m}(\vec{s},\vec{l},\vec{t})}\eta_{\vec{s},\vec{l},\vec{t}}E_{\vec{s}}L_{\vec{l}}S_{\vec{t}} \label{eq:fullsyndrome}.
\end{equation}

This expansion for $\tilde{U}$ makes it very straightforward to read off the Kraus operators of the effective channel. 
Since we start in the code space the stabilizers in this product act trivially. 
The syndrome measurement projects the system onto a single value of $\vec{s}$ and the correction $E_{\vec{s}}^\dagger$ removes the fiducial error for that syndrome. 
Consequently we obtain $2^{n-k}$ Kraus operators indexed by $\vec{s}$ acting on the logical qubit as follows
\begin{equation}
K_{\vec{s}}=  \sum_{\vec{l}}\sum_{\vec{t}} f_{w(\vec{s},\vec{l},\vec{t})}(\theta)\alpha^{\vec{m}(\vec{s},\vec{l},\vec{t})}\eta_{\vec{s},\vec{l},\vec{t}}L_{\vec{l}}. \label{eq:generalKraus}
\end{equation}
The non-uniform case is nearly identical
\begin{equation}
K_{\vec{s}}=  \sum_{\vec{l}}\sum_{\vec{t}}  g_{\vec{b}(\vec{s},\vec{l},\vec{t})}(\bm{\theta},\bm{\alpha})   \eta_{\vec{s},\vec{l},\vec{t}}L_{\vec{l}}. \label{eq:nonuniformKraus}
\end{equation}

This expression can be used to study general properties of the effective channel. 
We will restrict our discussion to the following theorem. 

\begin{theorem}
\label{thm:theorem}
For any $[n,k,d]$ stabilizer code, the diamond norm error $D'_\diamond$ of the logical qubits after ideal error correction using a minimum weight decoder satisfies
\begin{equation}
D'_\diamond \leq c_{n,k} D^d_\diamond
\end{equation}
for independent unitary errors with rotation angles satisfying $|\theta_i| \le \pi/4$, where $D_\diamond = \max_{i} \lvert\sin\theta_i\rvert$. 
The constant $c_{n,k}$ depends on $n$ and $k$ but is independent of the errors. 
\end{theorem}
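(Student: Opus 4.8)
The plan is to work directly from the syndrome-resolved Kraus representation in \cref{eq:nonuniformKraus} (with \cref{eq:generalKraus} as its uniform specialization) and to control the logical channel through its chi-matrix in the logical Pauli basis. Writing $K_{\vec s} = \sum_{\vec l} [K_{\vec s}]_{\vec l}\, L_{\vec l}$ with $[K_{\vec s}]_{\vec l} = \sum_{\vec t} g_{\vec b(\vec s,\vec l,\vec t)}(\bm\theta,\bm\alpha)\,\eta_{\vec s,\vec l,\vec t}$, the unconditional logical channel is $\mathcal N_L(\rho) = \sum_{\vec l,\vec l'} \chi_{\vec l,\vec l'} L_{\vec l}\rho L_{\vec l'}^\dagger$ with $\chi_{\vec l,\vec l'} = \sum_{\vec s} [K_{\vec s}]_{\vec l}[K_{\vec s}]_{\vec l'}^*$. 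Since $\mathcal I(\rho) = L_{\vec 0}\rho L_{\vec 0}^\dagger$, the triangle inequality for the diamond norm together with $\|L_{\vec l}\,\cdot\,L_{\vec l'}^\dagger\|_\diamond \le 1$ (each $L_{\vec l}$ is unitary) gives
\begin{equation}
\|\mathcal N_L - \mathcal I\|_\diamond \;\le\; |1-\chi_{\vec 0,\vec 0}| \;+\!\! \sum_{(\vec l,\vec l')\neq(\vec 0,\vec 0)}\!\! |\chi_{\vec l,\vec l'}| .
\end{equation}
Because there are only $4^k$ logical operators, this reduces the theorem to showing that every ``error'' entry of $\chi$ is $O_{n,k}(D_\diamond^d)$, the implied constant depending only on $n,k$.

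The two ingredients for bounding the entries are a coefficient bound and a weight (distance) bound. First, exactly as in the estimate $\lvert f_w(\theta)\rvert \le D_\diamond^w$ stated earlier, the non-uniform coefficients obey $\lvert g_{\vec b}(\bm\theta,\bm\alpha)\rvert \le \prod_{j:\,w_j(\vec b)=1}\lvert\sin\theta_j\rvert \le D_\diamond^{|\vec b|}$, using $\lvert\cos\theta_j\rvert\le 1$, $\lvert\alpha_{\cdot j}\rvert\le 1$, and $D_\diamond=\max_i\lvert\sin\theta_i\rvert$; the hypothesis $|\theta_i|\le\pi/4$ guarantees $D_\diamond\le 1/\sqrt2<1$, so higher powers of $D_\diamond$ are subdominant and the finitely many ($2^{n-k}$) terms in each $[K_{\vec s}]_{\vec l}$ sum to at most $2^{n-k} D_\diamond^{w_{\min}(\vec s,\vec l)}$, where $w_{\min}(\vec s,\vec l)=\min_{\vec t}|E_{\vec s}L_{\vec l}S_{\vec t}|$ is the lightest Pauli carrying syndrome $\vec s$ and logical action $\vec l$. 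Second, I would prove the key lemma $w_{\min}(\vec s,\vec l)\ge\max\{|E_{\vec s}|,\,d-|E_{\vec s}|\}$ for $\vec l\neq\vec 0$: the lower bound $|E_{\vec s}|$ holds because the minimum-weight decoder makes $E_{\vec s}$ the globally lightest error of syndrome $\vec s$ over \emph{all} logical classes, so restricting to class $\vec l$ cannot decrease the minimum weight; the lower bound $d-|E_{\vec s}|$ follows from the reverse triangle inequality for Pauli weight, $|E_{\vec s}L_{\vec l}S_{\vec t}|\ge |L_{\vec l}S_{\vec t}|-|E_{\vec s}|\ge d-|E_{\vec s}|$, since $L_{\vec l}S_{\vec t}$ is a nontrivial logical operator of weight at least $d$. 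In particular $w_{\min}(\vec s,\vec l)\ge\lceil d/2\rceil$.

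Assembling these, the diagonal entries satisfy $\chi_{\vec l,\vec l}\le 2^{3(n-k)}D_\diamond^{2\lceil d/2\rceil}\le 2^{3(n-k)}D_\diamond^{d}$ for $\vec l\neq\vec 0$, and the normalization $\sum_{\vec l}\chi_{\vec l,\vec l}=\tfrac{1}{2^k}\operatorname{Tr}\sum_{\vec s}K_{\vec s}^\dagger K_{\vec s}=1$ then controls $|1-\chi_{\vec 0,\vec 0}|=\sum_{\vec l\neq\vec 0}\chi_{\vec l,\vec l}=O_{n,k}(D_\diamond^d)$. The genuinely delicate entries are the coherences $\chi_{\vec 0,\vec l}$, which for physical coherent noise are linear in the rotation angle and are exactly what make the physical diamond norm large; here I would avoid Cauchy--Schwarz (which would lose a factor and yield only $D_\diamond^{\lceil d/2\rceil}$) and instead bound the product termwise, $\lvert[K_{\vec s}]_{\vec 0}[K_{\vec s}]_{\vec l}^*\rvert\le 2^{2(n-k)}D_\diamond^{|E_{\vec s}|+w_{\min}(\vec s,\vec l)}\le 2^{2(n-k)}D_\diamond^{\,|E_{\vec s}|+(d-|E_{\vec s}|)}=2^{2(n-k)}D_\diamond^{d}$, so that summing over the $2^{n-k}$ syndromes gives $|\chi_{\vec 0,\vec l}|\le 2^{3(n-k)}D_\diamond^d$; the remaining off-diagonal entries with both indices nonzero are handled by Cauchy--Schwarz from the diagonal bounds. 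Feeding all $O(16^k)$ entry bounds into the displayed inequality yields $D'_\diamond=\tfrac12\|\mathcal N_L-\mathcal I\|_\diamond\le c_{n,k}D_\diamond^d$ with $c_{n,k}$ a fixed function of $n,k$ alone. The main obstacle is precisely this coherence step: showing that the trivial-syndrome, identity-logical amplitude (weight $\ge |E_{\vec s}|$) and the residual logical-error amplitude (weight $\ge d-|E_{\vec s}|$) combine so their supports add to at least $d$, which is what promotes the coherent $D_\diamond$ scaling all the way to $D_\diamond^d$ rather than the $D_\diamond^{\lceil d/2\rceil}$ expected for a stochastic fidelity-type argument.
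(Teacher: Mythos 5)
Your proposal is correct and follows essentially the same route as the paper's proof: the matrix $\chi_{\vec l,\vec l'}$ is exactly the paper's $r_{\vec l,\vec l'}$, your weight lemma $w_{\min}(\vec s,\vec l)\ge\max\{|E_{\vec s}|,\,d-|E_{\vec s}|\}$ reproduces the paper's weight analysis of the syndrome/logical/stabilizer decomposition, and the decisive step --- bounding the coherences $\chi_{\vec 0,\vec l}$ termwise so that the exponents $|E_{\vec s}|$ and $d-|E_{\vec s}|$ add to $d$ rather than losing to $D_\diamond^{\lceil d/2\rceil}$ via Cauchy--Schwarz --- is precisely the paper's inequality \cref{eq:derivation2}. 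The only cosmetic differences are that you work directly in the non-uniform setting and use $|g_{\vec b}|\le D_\diamond^{|\vec b|}$ with $D_\diamond\le 1$ in place of the paper's identity $|f_wf_{w'}|\le|f_{w+w'}|$, neither of which changes the substance of the argument.
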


\begin{proof}
Our result is analogous to Theorem 5.5 of~\cite{Fern2006} when applied to unitary errors, and our proof follows the contours of that theorem. 
We will explicitly consider the case of uniform errors. 
The generalization to non-uniform errors is a straightforward modification using the approach of \cref{sec:nonuniform} and \cref{eq:nonuniformKraus} for the Kraus operator arising from the syndrome $\vec{s}$ with non-uniform unitary noise.

We must first explicitly separate the nontrivial logical operators from the identity in the expansion over syndromes. 
So consider
\begin{equation}
\tilde{U}= \sum_{\vec{s}} \sum_{\vec{t}} f_{w(\vec{s},\vec{0},\vec{t})}(\theta)\alpha^{\vec{m}(\vec{s},\vec{0},\vec{t})}\eta_{\vec{s},\vec{0},\vec{t}}E_{\vec{s}}\bar{I}S_{\vec{t}}+\sum_{\vec{s}} \sum_{\vec{l}\neq \vec{0}}\sum_{\vec{t}} f_{w(\vec{s},\vec{l},\vec{t})}(\theta)\alpha^{\vec{m}(\vec{s},\vec{l},\vec{t})}\eta_{\vec{s},\vec{l},\vec{t}}E_{\vec{s}}L'_{\vec{l}}S_{\vec{t}}.\label{eq:byweight}
\end{equation}
We have written $L'_{\vec{l}}$ in the second term above just to emphasize that the logical identity operator does not arise in the sum. 
We seek to understand the relative size of the terms in this expansion. 
We have restricted attention to $|\theta|\leq \pi/4$ so that higher weight errors are less likely than lower weight errors. 
Notice that in this regime we have $\lvert\tan\theta\rvert\leq 1$ and therefore $|f_w(\theta)|\leq |f_{w'}(\theta)|$ when $w\geq w'$. 
On the other hand all the factors $\alpha^\vec{m}$ satisfy $|\eta_{\vec{b}}\alpha^\vec{m}|\leq 1$ and the operator norm of the product of Pauli matrices $E_{\vec{s}}L_{\vec{l}}S_{\vec{t}}$ is $\leq 1$.

We now analyze the Hamming weights of the various terms in \cref{eq:byweight}. 
In each term of this sum we will specify the weight of the correction operator $E_{\vec{s}}^\dagger$ by $w_s$. 
Considering the first of the two terms in \cref{eq:byweight}, the identity is a member of the stabilizer group, so that one of the terms is just $E_\vec{s}$ itself and has Hamming weight $w_s$. 
The other terms in the sum are of the form $E_\vec{s} S_{\vec{t}}$ and each of these terms is a possible error process that leads to the same syndrome $\vec{s}$. 
But we have chosen our decoder such that $w_s$ is the lowest possible weight for an error with this syndrome. 
So all the contributions to the first term in \cref{eq:byweight} have $w\geq w_s$. 
Since we have a quantum error correcting code, all errors up to and including some Hamming weight $t$ will be corrected by the code. 
For odd $d$ we have $d=2t+1$, while for even $d$, $d=2t+2$. 
Every product of Paulis of Hamming weight $w\leq t$ is either a member of the stabilizer group, or one of the fiducial errors $E_{\vec{s}}$, or is equivalent to some $E_{\vec{s}}$ up to multiplication by a stabilizer operator. 
Consequently every product of Paulis with Hamming weight $w\leq t$ occurs somewhere in the first term of \cref{eq:byweight}. 
Considering now the second term in \cref{eq:byweight}, clearly we have $w\geq t+1$ for all the contributions, regardless of $\vec{s}$. 
For specific values of $\vec{s}$ we can be more precise. 
For each syndrome the factor $L'_{\vec{l}}S_{\vec{t}}$ is a non-trivial logical operator and therefore has weight at least $d$. 
Therefore in each contribution to the syndrome $\vec{s}$ the product of Pauli matrices $E_{\vec{s}}L'_{\vec{l}}S_{\vec{t}}$ has weight $w\geq d-w_s$. 
(Note that in examples like the Steane code it is possible to correct certain errors $E_\vec{s}$ with $w_s>t$. 
For these syndromes $w\geq t+1$ will be a better bound on the Hamming weight of $E_{\vec{s}}L'_{\vec{l}}S_{\vec{t}}$.)

Consequently we can consider the corresponding expansion of $K_\vec{s}$
\begin{equation}
K_{\vec{s}}=  \sum_{\vec{t}} f_{w(\vec{s},\vec{0},\vec{t})}(\theta)\alpha^{\vec{m}(\vec{s},\vec{0},\vec{t})}\eta_{\vec{s},\vec{0},\vec{t}}\bar{I} + \sum_{\vec{l}\neq \vec{0}}\sum_{\vec{t}} f_{w(\vec{s},\vec{l},\vec{t})}(\theta)\alpha^{\vec{m}(\vec{s},\vec{l},\vec{t})}\eta_{\vec{s},\vec{l},\vec{t}}L'_{\vec{l}}.
\end{equation}
In the first term we have $w\geq w_s$ and in the second we have $w\geq \max (d-w_s,t+1)$. 
The overall effective channel is
\begin{equation}
\mathcal{N}_L(\rho)=\sum_\vec{s} K_{\vec{s}}\rho K_{\vec{s}}^\dagger = \sum_{\vec{l},\vec{l}'} r_{\vec{l},\vec{l}'}L_{\vec{l}}\rho L_{\vec{l'}}^\dagger.
\end{equation}
Specifically
\begin{equation}
r_{\vec{l},\vec{l}'}=  \sum_{\vec{s},\vec{t},\vec{t}'} f_{w(\vec{s},\vec{l},\vec{t})}(\theta)f_{w(\vec{s},\vec{l}',\vec{t}')}(\theta)\alpha^{\vec{m}(\vec{s},\vec{l},\vec{t})}\alpha^{\vec{m}(\vec{s},\vec{l}',\vec{t}')}\eta_{\vec{s},\vec{l},\vec{t}}\eta_{\vec{s},\vec{l}',\vec{t}'}. \label{eq:coeffs}
\end{equation}
This is a trace-preserving, unital, completely positive map, since the encoding, recovery and decoding maps are all unital and trace preserving. 

If we specify that both $\vec{l}$ and $\vec{l}'$ are non-trivial so $L_\vec{l},L_{\vec{l}'}\neq \bar{I}$ then 
\begin{equation}
\label{eq:derivation1}
|r_{\vec{l},\vec{l}'}|\leq  \sum_{\vec{s},\vec{t},\vec{t}'} |f_{w(\vec{s},\vec{l},\vec{t})}(\theta)f_{w(\vec{s},\vec{l}',\vec{t}')}(\theta)|\leq  8^{n-k}|f_{t+1}(\theta)|^2\leq 8^{n-k}|f_{2t+2}(\theta)| \leq 8^{n-k}|f_{d}(\theta)|.
\end{equation}
The first inequality is just the triangle inequality combined with the fact that $|\alpha^\vec{m}|\leq 1$ and $\eta_{\vec{s},\vec{l},\vec{t}}\leq 1$. 
The second inequality follows from the fact that each $w$ appearing in the sum satisfies $w\geq t+1$ and so $|f_w|\leq |f_{t+1}|$ when $|\theta|\leq \pi/4$. 
The prefactor just counts the number of terms in the sum over the syndromes and two copies of the stabilizer group. 
To see the third inequality notice that if $w+w'\leq n$ then $|f_wf_{w'}|=\lvert\sin(\theta)\rvert^{w+w'}\lvert\cos(\theta)\rvert^{2n-w-w'}\leq \lvert\sin(\theta)\rvert^{w+w'}\lvert\cos(\theta)\rvert^{n-w-w'}=|f_{w+w'}|$. 

Applying the same logic to the case where $\vec{l}'=\vec{0}$, so that $L_{\vec{l}'}=\bar{I}$, and $\vec{l}\neq \vec{0}$ we see that
\begin{equation}
\label{eq:derivation2}
|r_{\vec{l},\vec{0}}|\leq  \sum_{\vec{s},\vec{t},\vec{t}'} |f_{w(\vec{s},\vec{l},\vec{t})}(\theta)f_{w(\vec{s},\vec{0},\vec{t}')}(\theta)|\leq  4^{n-k}\sum_{\vec{s}} |f_{w_s}(\theta)f_{d-w_s}(\theta)|\leq   8^{n-k}|f_{d}(\theta)|.
\end{equation}
The second inequality here results from the fact discussed above that for each term in the sum $w(\vec{s},\vec{l},\vec{t})\geq d-w_s$ and $w(\vec{s},\vec{0},\vec{t})\geq w_s$.
Clearly the case of $r_{\vec{0}\vec{l}'}$ with $\vec{l}'\neq \vec{0}$ is the same.

Finally we can bound $r_{\vec{0},\vec{0}}$ since $\mathcal{N}_L(\rho)$ is trace preserving and unital and so $\sum_{\vec{l}}r_{\vec{l},\vec{l}} = 1$, and therefore
\begin{equation}
|r_{\vec{0},\vec{0}}-1|\leq (4^k-1) 8^{n-k}|f_{d}(\theta)|.
\end{equation}

The result follows by applying the triangle inequality and the fact that if $\mathcal{N}_{\vec{l},\vec{l}'}(\rho)=L_{\vec{l}}\rho L_{\vec{l'}}^\dagger$ then $\|\mathcal{N}_{\vec{l},\vec{l}'}\|_\diamond = 1$. 
This last statement follows from the characterization of the diamond norm in terms of the maximum output fidelity in \cite[Theorem 5]{Watrous2013}. 
\end{proof}

A couple of remarks are in order. 
First, collecting the constant factors and simplifying, we see that the statement holds with $c_{n,k}=2^{3n+k} + 2^{3n-k} \le 2^{3n+k+1}$, but no attempt was made to optimize the constant and significantly better bounds might exist. 
Second, we note that most of the error correcting codes that we have studied saturate this inequality in the sense that the diamond norm distance after error correction scales as $|\theta|^d$ for a small initial unitary angle, regardless of code or rotation axis. 
Only for certain codes and specific rotation axes have we observed a more favorable scaling.
\acdedit{Third, our theorem bounds the diamond norm error in terms of $\theta^{d}$ for all stabilizer codes. However for large distance codes operating close to threshold this bound will be very weak. For large distance codes the expected $\theta^d$ dependence of the effective channel will occur but only for very small values of $\theta$, very far below threshold.}
We also note that the intermediate stages of our proof reproduce the observations about unitary channels made at the level of the process matrix by \citet{Gutierrez2016}. 
Finally, we note that a related version of this result in terms of the average gate infidelity was recently proven independently by~\citet{Beale2018}. 

%--------------------------------------------------------------------------------------------------------------------%
\subsection{Effective channels for symmetric codes} \label{sec:symmetric}
%--------------------------------------------------------------------------------------------------------------------%

It has been observed that the effective channels for widely studied error correcting codes, such as the five qubit and Steane codes, typically have many fewer distinct conditional channels than there are syndromes of the code~\cite{Chamberland2017}. 
We saw an example of this for the repetition code where, due to the permutation symmetry of the code, the conditional channel just depends on the weight of the correction operator and not on the detailed syndrome for uniform noise. 
Clearly symmetries of the error correcting code can simplify the effective channels that arise for uniform noise. 
Here we briefly discuss these simplifications. 
Symmetries of stabilizer codes have been studied previously, see for example~\cite{calderbank1998quantum, rains1999quantum, grassl2013leveraging, zeng2011transversality}. 

We will largely focus on the automorphism group of the code under permuting qubits, but we can formulate the notion of symmetry more generally. 
For our purposes, a stabilizer code together with its error correction procedure will have interesting symmetries when there is some subgroup $\mathsf{G}$ of the Clifford group acting on the physical qubits that preserves the error correction procedure. 
We will say that an error correcting code is symmetric under $\mathsf{G}$ if the logical operators can be chosen to be invariant under $\mathsf{G}$ and the stabilizer group is preserved under the action of $\mathsf{G}$. 
Specifically for all $G\in \mathsf{G}$ and stabilizer elements $\vec{t}$ there exists some $\vec{t}'$ such that  $G S_{\vec{t}}G^\dagger=S_{\vec{t}'}$, while for all $G\in \mathsf{G}$ and $\vec{l}$ we have $G L_\vec{l}G^\dagger =L_\vec{l}$. 
We will say that the decoding procedure is symmetric if the code is symmetric and for all $G$ and $\vec{t}$ there is some $\vec{t}'$ such that $G E_{\vec{t}}G^\dagger=E_{\vec{t}'}$. 
Up until now we have used the term symmetric decoder if the correction $E_\vec{t}^\dagger$ has the lowest weight of any error that could have caused the syndrome $\vec{t}$. 
We now place this further requirement on the choice of correction. 

When the error correction procedure is symmetric the syndromes will break up into orbits under the action of $\mathsf{G}$. 
Rather than considering general Clifford symmetries of stabilizer codes we will now specialize to the case where the symmetry of the code is some group of permutations of physical qubits. 
Qubit permutations are very special because they preserve the Hamming weight and $\alpha^\vec{n}$. 
Thus if we perform a permutation $G P_{\vec{b}}G^\dagger =P_{\vec{b}'}$ then $w(\vec{b}')=w(\vec{b})$ and so on. 
We will see that conditional channels of all syndromes in a given orbit are equal for such permutation symmetries of stabilizer codes. 
This explains the simplifications observed in \cref{sec:repetition} for the repetition code and in~\cite{Chamberland2017} for a range of other codes.

To demonstrate that the effective channels for syndromes in a single orbit are equal, consider a fixed syndrome $\vec{s}$ and a fixed permutation $G \in \mathsf{G}$. 
Then we define $\vec{s}'$ such that $G E_{\vec{s}}G^\dagger=E_{\vec{s}'}$. 
Now let us consider a single term in \cref{eq:fullsyndrome} that has the chosen $\vec{s}$, we have 
\begin{equation}
	P_{\vec{b}}= \eta_{\vec{s},\vec{l},\vec{t}}E_{\vec{s}}L_{\vec{l}}S_{\vec{t}}
\end{equation}
and if we define $\vec{b}'$ such that $GP_{\vec{b}}G^\dagger=P_{\vec{b}'}$ we have
\begin{equation}
P_{\vec{b}'}= \eta_{\vec{s},\vec{l},\vec{t}}(GE_{\vec{s}}G^\dagger)(G L_{\vec{l}}G^\dagger)(G S_{\vec{t}}G^\dagger) = \eta_{\vec{s},\vec{l},\vec{t}} E_{\vec{s}'} L_{\vec{l}} S_{\vec{t}'} = \eta_{\vec{s}',\vec{l},\vec{t}'} E_{\vec{s}'} L_{\vec{l}} S_{\vec{t}'} .
\end{equation}
The second \acdedit{equality} just arises from the definitions and the fact that the error correction procedure is symmetric under $\mathsf{G}$. 
The third equality holds because $\eta$ can be written in terms of the commutators of the various factors, and these are preserved under the action of the unitary. 
(Alternatively it can be written in terms of the symplectic inner product which is manifestly invariant under swapping qubits; see the appendix for more details.) Moreover we have
\begin{equation}
	w(\vec{s},\vec{l},\vec{t})= w(\vec{b})=w(\vec{b}')=w(\vec{s}',\vec{l},\vec{t}')
\end{equation}
since permutations leave the weight unchanged, and the same identity holds for $\alpha^\vec{n}$. 
The mapping takes $S_{\vec{t}}$ to $S_{\vec{t'}}$ under the group $\mathsf{G}$ so in the formula for the Kraus operator of the conditional channel (\ref{eq:generalKraus}) each term in the sum for $K_\vec{s}$ maps invertibly to a single term in the sum for $K_{\vec{s}'}$. 
Consequently  we have
\begin{eqnarray}
K_{\vec{s}}&=&  \sum_{\vec{l}}\sum_{\vec{t}} f_{w(\vec{s},\vec{l},\vec{t})}(\theta)\alpha^{\vec{n}(\vec{s},\vec{l},\vec{t})}\eta_{\vec{s},\vec{l},\vec{t}}L_{\vec{l}}\\
&=&  \sum_{\vec{l}}\sum_{\vec{t}'} f_{w(\vec{s}',\vec{l},\vec{t}')}(\theta)\alpha^{\vec{n}(\vec{s}',\vec{l},\vec{t}')}\eta_{\vec{s}',\vec{l},\vec{t}'}L_{\vec{l}} =K_{\vec{s}'}.
\end{eqnarray}
Thus all the Kraus operators corresponding to syndromes in a given orbit of the symmetry are equal.

We have shown these simplifications only for unitary noise at the physical level. 
It would be interesting to extend this analysis to noise processes with more than a single Kraus \acdedit{operator}. 

In the case of repetition codes with an odd number of qubits, we can choose the code and decoding procedure to have the full permutation symmetry of the $d$ qubits. 
This implies that the Kraus operators can only depend on the Hamming weight of the correction procedure, as we observed in \cref{sec:repetition}. 

In the case of the five-qubit code~\cite{Laflamme1996}, the stabilizer group is generated by the operators $XZZXI$ and its cyclic permutations and the logical operators can be chosen to be $\bar{X}=XXXXX$ and $\bar{Z}=ZZZZZ$. 
It is clear that this code is symmetric under cyclic permutations of the qubits. 
It is also symmetric under the permutation $(1, 5)(2,4)$, which is a reflection if the five qubits are arranged on the vertices of a pentagon. 
This generates the group $D_{10}$ of symmetries of the pentagon. 
(We believe that this is the full symmetry group of the code.)
The five qubit code is a $[5,1,3]$ code and there are $2^{n-k}=16$ distinct syndromes. 
In the symmetric decoder we choose correction operations that are the lowest weight Paulis that can result in each syndrome. 
These are the identity $IIIII$, and the fifteen single qubit Pauli's. 
This set of correction operators is symmetric under the full permutation group, so we can conclude that the overall error correction procedure is symmetric under $D_{10}$. 
The correction operations break into 4 orbits under this group. 
Representatives of these orbits are $IIIII,XIIII,YIIII$ and $ZIIII$. 
Consequently there are 4 distinct conditional effective channels for the 5 qubit code under the symmetric decoder. 
This agrees with the number of conditional effective channels in~\cite{Chamberland2017}.

For the Steane code~\cite{Steane1996} we can use similar arguments. 
This is a CSS code with its $X$-type stabilizers generated by $XXXIXII, XIXXIXI$ and $XXIXIIX$. 
The $Z$-type stabilizer generators are obtained by replacing $X$ with $Z$. 
The logical operators can be chosen to be $\bar{X}=XXXXXXX$ and $\bar{Z}=ZZZZZZZ$. 
The code is symmetric under the permutations $(2, 3)(6, 7)$, $(2, 3, 4)(5, 6, 7)$, $(1, 3)(2, 5)$, and $(1, 3)(4, 6)$. 
These permutations generate the 168-element group $\text{GL}(3,2)$, which is the group of symmetries of the Fano plane. 
One can check by exhaustive search that this is the full symmetry group of the code. 
In order to identify the symmetry group we found it useful to picture the Steane code as the smallest possible color code. 
From this point of view the qubits in the code correspond one-to-one with the points of the Fano plane. 
Moreover the lines of the Fano plane correspond to the logical operators with weight 3. 
(It would be interesting to know whether some relationship of this kind holds between other color codes and other finite geometries.) 
%\acd{The idea here is that the finite geometry has points given by physical qubits and lines given by logical operators multiplied by stabilizer elements. I think it is "obvious" that there is a finite geometry where the lines are one-to-one to the set of logical X's with one element removed, just as for the Steane code. If that were true then it would immediately follow that the symmetries of the code and the symmetries of the finite geometry are identical.} 
Since the Steane code is a $[7,1,3]$ code there are  $2^{n-k}=64$ distinct syndromes. 
In the symmetric decoder the correction operations are the identity $IIIIIII$, the 21 single qubit Paulis, and the 42 Paulis that are the tensor product of a single $X$ with a single $Z$. 
This set of correction operators is symmetric under the full permutation group, so we can conclude that the overall error correction procedure is symmetric under $\text{GL}(3,2)$. 
The correction operations break into 5 orbits under this group. 
Representatives of these orbits are $IIIIIII, XIIIIII, YIIIIII, ZIIIIII$, and $XZIIIII$. 
This number of orbits does not agree with the seven inequivalent conditional effective channels stated in~\cite{Chamberland2017} for general noise processes. 
However our result only applies to unitary channels (which may have higher symmetries than the more general channels considered in~\cite{Chamberland2017}) \acdedit{ and it is not clear whether the decoder in~\cite{Chamberland2017} has the highest possible symmetry. So} we expect that these statements are not in disagreement. 

Note that these simplifications are independent of the exact unitary error that has occurred and for certain highly symmetric noise processes it is possible that there are further reductions in the number of distinct conditional effective channels. 
We will see that an example of this is the 5 qubit code with unitary errors of the form $\exp[-i\theta(X+Y+Z)/\sqrt{3}]$.

For specific examples that we considered there are still patterns in the expressions for the conditional effective channels that arise from the structure of the code. 
One instance where such examples arise is when the code has a non-trivial single-qubit transversal gate.
Such a transversal gate maps the stabilizer group to itself but acts non-trivially on the logical operations, the corrections $E_\vec{s}$, and the factors $\alpha^\vec{n}$. 
In the example of the 5 qubit code the Clifford gate that maps $X\rightarrow Y\rightarrow Z\rightarrow X$ is transversal. 
This makes it possible to infer the conditional effective channel for an error in the orbit given by $YIIII$ in terms of the one for errors in the orbit containing $XIIII$. 

%--------------------------------------------------------------------------------------------------------------------%
\section{Performance of stabilizer codes under coherent errors}\label{sec:unitarycodeperformance}
%--------------------------------------------------------------------------------------------------------------------%

In this section we will review the performance of a variety of quantum error correcting codes against coherent errors using the techniques developed so far. 
Recall that for a unitary error we have $D_{\diamond}\approx 3\sqrt{r}/2 \approx |\theta|$ so that there is a large difference between the size of errors as estimated in randomized benchmarking and errors as measured by diamond norm distance. 
Characteristically we find that after error correction the diamond norm error of the effective channel $D'_\diamond$ is much smaller than would be expected based on the diamond norm error at the physical level. 
For dephasing errors $D'_\diamond \propto D_\diamond^{t+1}$ for a small error. 
On the other hand we have seen that for a unitary error $D'_\diamond\leq c D_\diamond^d$.  

The difference in this behavior is almost sufficient to overcome the distinction between diamond norm error and average gate infidelity at the physical level. 
If $r$ is the average fidelity error at the physical level then for an odd distance code we find $D'_\diamond\leq c(3/2)^{2t+1} r^{t+1/2}$ for unitary errors, while for dephasing errors $D'_\diamond \propto r^{t+1}$. 
For even distance codes this comparison is even more striking, we find $D'_\diamond\leq c(3/2)^{2t+2} r^{t+1}$ for unitary errors, compared to $D'_\diamond \propto r^{t+1}$ for dephasing errors. 
So the scaling with $r$ is the same for both unitary and dephasing errors. 
We emphasize that these results assume ideal error correction, and do not analyze fault tolerant gadgets and noisy gates and measurements. 
They do suggest that there is a need for a sharper analysis of the performance of fault tolerant gadgets against unitary errors. 

The remainder of this section explores the logical effective channels and the logical diamond norm error scaling relative to the physical diamond norm error and the physical average gate infidelity for a variety of small quantum codes using pure unitary errors around arbitrary axes. 
In the next section, we will also consider the more general noise model that includes dephasing and demonstrate with the Steane code how the computation of the effective logical channel can be used to find thresholds for concatenated codes with the symmetric decoder. 

%--------------------------------------------------------------------------------------------------------------------%
\subsection{Repetition codes}
%--------------------------------------------------------------------------------------------------------------------%

In the limit of low error $x,y\ll 1$ we can compute simple approximations for the diamond norm error of the effective channel of the repetition codes discussed in \ref{sec:repetition}. 
For pure dephasing noise we have $y=0$ and $D'_\diamond \simeq \binom{n}{t+1}x^{t+1} =  \binom{n}{t+1}D_\diamond^{t+1}=\binom{n}{t+1}(3/2)^{t+1}r^{t+1}$ which is the expected scaling with error probability, physical diamond norm error, and average gate infidelity respectively. 

For unitary errors we have $y^2=x(1-x)$ and it is possible to see that $D'_\diamond$ is dominated by the contribution arising from $y'$. 
Consequently we have $D'_\diamond \simeq \binom{2t}{t}|y|^{2t+1}\simeq \binom{2t}{t}D_\diamond^{2t+1}\simeq \binom{2t}{t} (3/2)^{t+1/2} r^{t+1/2}$, which is the scaling suggested by \cref{thm:theorem}. 
\acdedit{Notice that these results imply that a simple Pauli Twirling Approximation to a unitary error, which just sets $y=0$, would underestimate the logical error, finding $D'_{PTA\diamond} \simeq \binom{2t+1}{t+1}|\theta|^{2t+2}$ in the limit of small $|\theta|$, as opposed to $\binom{2t}{t}|\theta|^{2t+1}$.  In the terminology of \cite{magesan2013modeling} this implies that the Pauli Twirling Approximation to the effective noise channel is \emph{dishonest} but it is less dishonest than when the PTA is applied at the logical level. }

\subsection{Effective channels for unitary errors: the five-qubit code and others}
%--------------------------------------------------------------------------------------------------------------------%

As discussed in \cref{sec:symmetric} the five qubit code is symmetric under cyclic permutation of the qubits of the code, and also possesses a symmetry resulting from the transversal Clifford gate operation that maps $X\rightarrow Y\rightarrow Z\rightarrow X$. 
Consequently we begin by considering unitary errors of the form $\exp[-i\theta (X+Y+Z)/\sqrt{3}]$ which enhances the symmetry of the various effective error channels. 
The five qubit code has 16 syndromes, and in this highly symmetric case there are only two conditional effective channels, one corresponding to the trivial syndrome. 
All the non-trivial syndromes result in the same effective channel. 
These two channels are each described by a single Kraus operator $K_0$ in the trivial case and $K_1$ in the non-trivial case. 
They can be written as follows
\begin{align}
    K_0 &= (g_0+15g_4)\bar{I} - (10g_3 - 6g_5)(\bar{X}+\bar{Y}+\bar{Z})\\
    K_1 &= (g_1+4g_3+3g_5)\bar{I} - (2g_2+2g_4)(\bar{X}+\bar{Y}+\bar{Z}),
\end{align}
where $g_w=f_w/(\sqrt{3})^w$. 

Inspecting the expressions for $K_0$ and $K_1$ we can see that this behavior is qualitatively very similar to what we observed for the repetition codes. 
It is straightforward to check that both $K_0$ and $K_1$ are proportional to a unitary and correspond to rotation about the same axis $(X+Y+Z)/\sqrt{3}$ as occurs at the physical level. 
In the case of a small rotation error $\theta\ll 2\pi$ we find that the trivial syndrome occurs with high probability and results in a rotation $\theta'_0\simeq 10 \theta^3/3$. 
A non-trivial syndrome occurs with probability $\simeq 15\theta^2$ and results in a rotation $\theta'_1\simeq -2\theta$. 
Notice that conditioned on a non-trivial syndrome the rotation angle has actually increased. 
Nevertheless this pattern is exactly the one seen in the repetition code, and saturates the scaling limit of \cref{thm:theorem} since we have $D'_\diamond \simeq c D_\diamond^3$ for small $\theta$. 
The qualitative behaviors that we observed for the repetition codes and special rotation axes are also seen in proper stabilizer codes for typical rotation axes. 

Consider now a rotation about the $Z$-axis. 
In this case we find four inequivalent Kraus operators. 
The first corresponds to the trivial syndrome. 
There are then five syndromes that detect single $Z$ errors and result in a effective error that is rotation about the $Z$ axis. 
The other two classes of syndromes would generally detect single $Y$ and $X$ errors. 
But since these do not arise for this noise model, they detect two-qubit $Z$ errors of the form $ZZIII$ and $ZIZII$ respectively. 
The resulting Kraus operators are
\begin{align}
    K_0 &= f_0 \bar{I} + f_5 \bar{Z}\\
    K_1 &= f_1\bar{I} +f_4\bar{Z} \\
    K_2 &= -f_2\bar{Y}-if_3 \bar{X} \\
    K_3 &= -f_2\bar{X}+if_3\bar{Y}.
\end{align}
Here $K_1$ is the conditional channel that results when the correction operation is a $Z$ operation, $K_2$ corresponds to a $Y$ operation and $K_3$ is an $X$ operation. 
The Kraus operators are proportional to unitaries, but this time the rotation axis is not necessarily the same as the original unitary. 
Notice that from these conditional effective channels we find $D'_\diamond \simeq c D_\diamond^4$ so that unitary errors about this axis are much better corrected than for a typical rotation axis. 

We see that for this Pauli axis rotation, the 5 qubit code behaves as if it corrects two errors rather than one, and if we modified the correction operation to correct for two-qubit $Z$ errors it would be possible to correct the unitary rotation such that $D'_\diamond \simeq c D_\diamond^5$. 
This is consistent with the fact that the five qubit code has distance 5 if there are only $Z$ errors, as it becomes a repetition code in this limit. 
Note that this improvement requires that the rotation axis of the unitary is known.

We will not write the effective channels for a general rotation axis, but it is illuminating to see how the performance of the code varies as the rotation axis is changed. 
This is shown in \cref{fig:5qubit}, where $D_{\diamond}'/D_{\diamond}^d$ is plotted for various rotation axes. 
The improved performance for rotations about the Pauli axes is clearly visible. 
It can be seen that rotations about axes like $(X+Y+Z)/\sqrt{3}$ are local maxima for the diamond norm error after error correction.

\begin{figure}[t!]
(a) \subfigure{
\includegraphics[width=.55\linewidth]{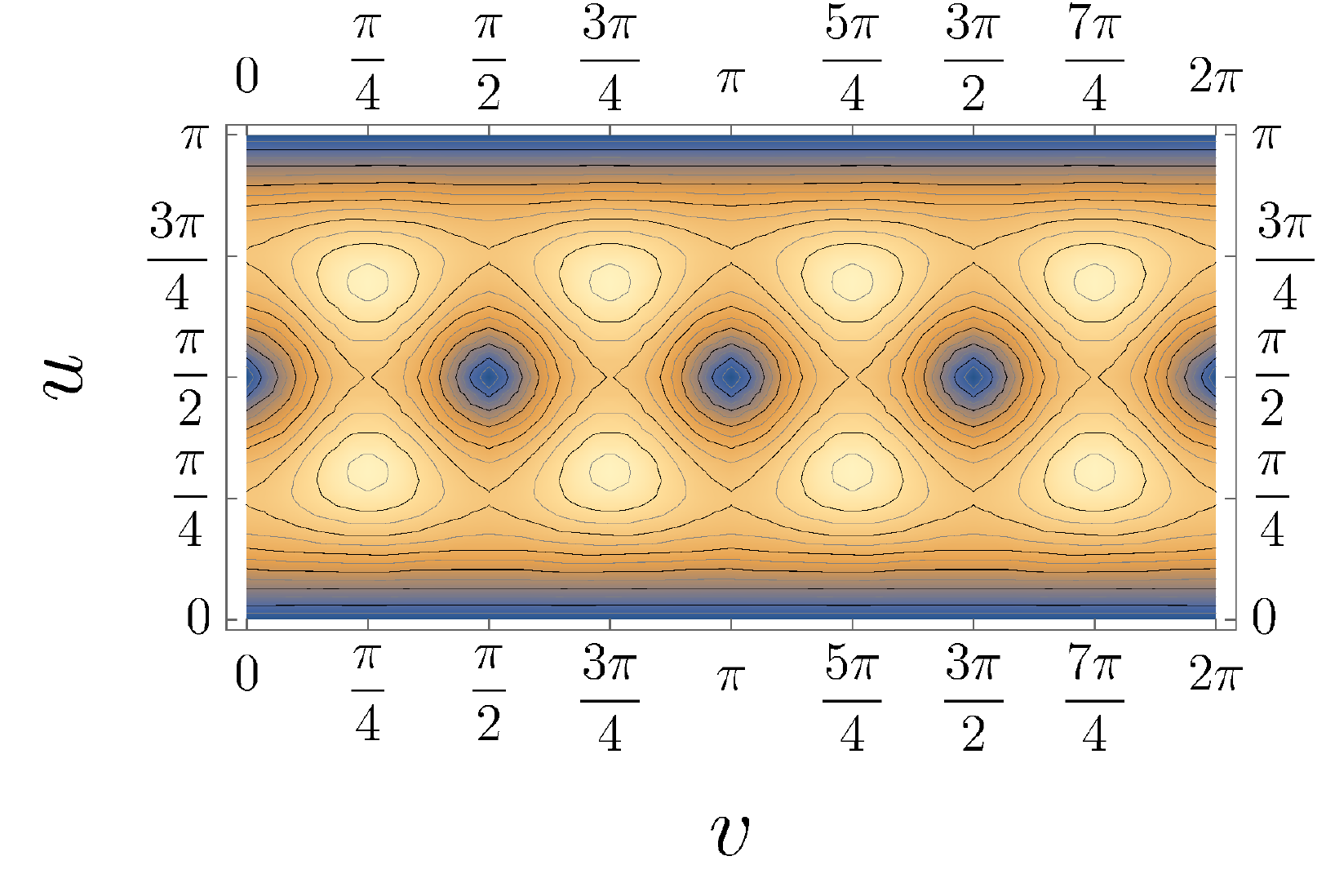}
\raisebox{0.17\height}{\includegraphics[height=0.05\linewidth,angle=90]{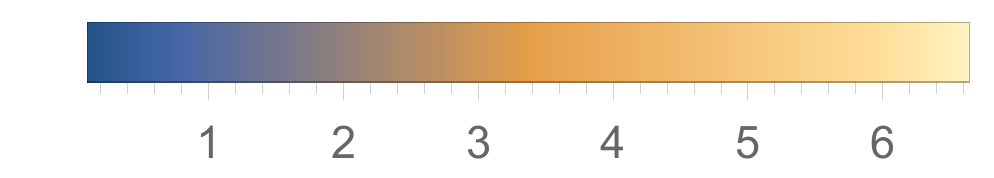}}
}
 (b) \subfigure{\includegraphics[width=0.3\linewidth]{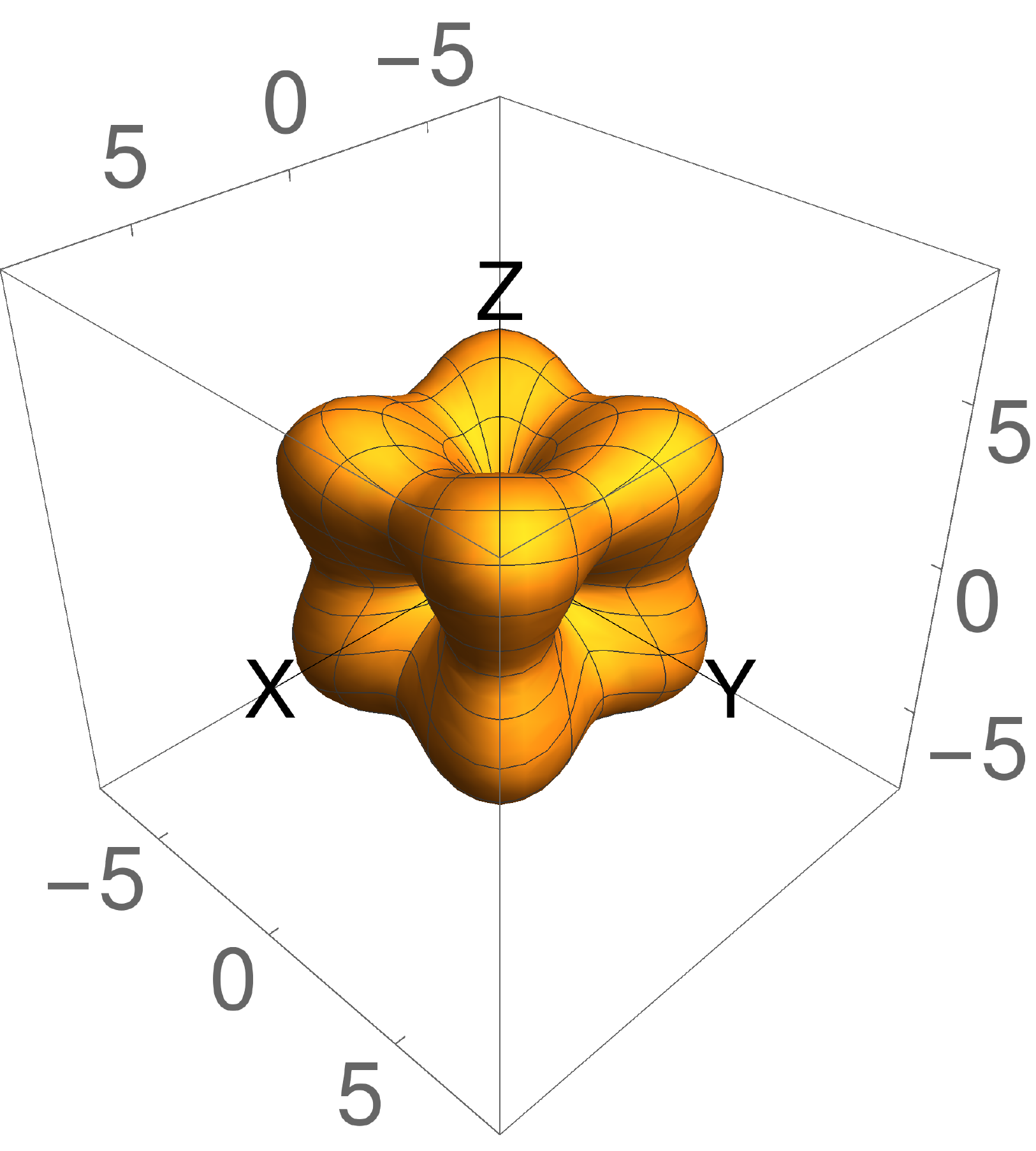}
}
\caption{(a) Contour plot of the ratio $D_{\diamond}'/D_{\diamond}^d$ against the rotation axis of a single qubit unitary error for the $[5,1,3]$ five-qubit code using the symmetric decoder. 
The polar angle is $u$ and the azimuthal angle is $v$. 
The rotation angle for the unitary error was chosen to be $\theta=0.01$. (b) 
3D spherical plot of the ratio $D_{\diamond}'/D_{\diamond}^3$ for the $[5,1,3]$ five-qubit code calculated as for (a). 
Rotations about Pauli axes are suppressed at a higher order than all other axes.}
\label{fig:5qubit}
\end{figure}

We can study the performance of many other stabilizer codes in the same way. 
The general behavior is similar to the five qubit code, although for typical codes the scaling of $D_{\diamond}'$ with $D_{\diamond}^d$ holds for all rotation axes. 
This is true of the Steane code, for example, as shown in \cref{subfig:Steane}, where we see improved performance for rotations about $(X+Y+Z)/\sqrt{3}$ and reduced performance for rotations about the Pauli axes. 

We provide a few further examples of the dependence of the effective channel on the rotation axis of the unitary which arises when using the symmetric decoder. 
In \cref{subfig:Li} we show the analogous plot for the Bare $[7,1,3]$ code of~\citet{Li2017} and in \cref{subfig:Shor} the $[9,1,3]$ Shor code~\cite{Shor1995}. 
In the case of the Shor code errors for rotations about $Z$ are corrected less well than rotations about $X$ due to the structure of this code as a concatenation of codes correcting bit flips and phase flips. \acdedit{This concatenation structure means that the subgroup of the stabiliser group made up of products of $X$ operators is larger than the subgroup made of products of $Z$'s.}
Lastly, we show the $[9,1,3]$ surface code~\cite{Bravyi1998} in \cref{subfig:surface}. 
Here again we see that there is enhanced correction along $(X+Z)/\sqrt{2}$ and especially along $Y$. 
This is consistent with the large increase in the error correction threshold observed for incoherent $Y$ noise in the surface code when using an optimal decoder~\cite{Tuckett2018}. 
\acdedit{Unfortunately, we do not have a general understanding of the symmetry axes that show enhanced performance for all of the codes that we plot in \cref{fig:balloon}, and it remains an intriguing open problem to better explain or predict such behavior in the most general case.}

\begin{figure}[t!]
(a) \subfigure{\includegraphics[width=0.21\linewidth]{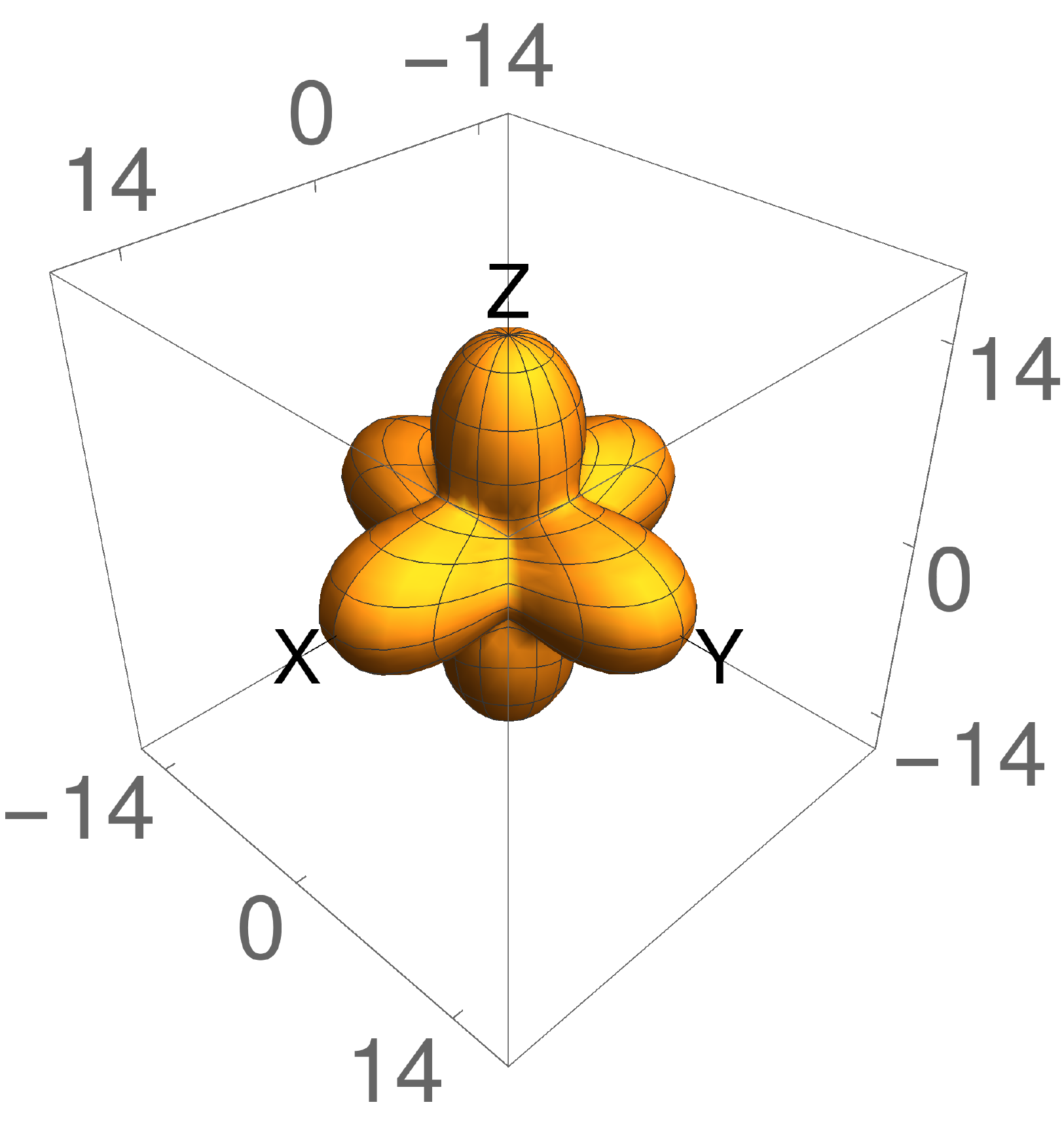}\label{subfig:Steane}}
(b) \subfigure{\includegraphics[width=0.21\linewidth]{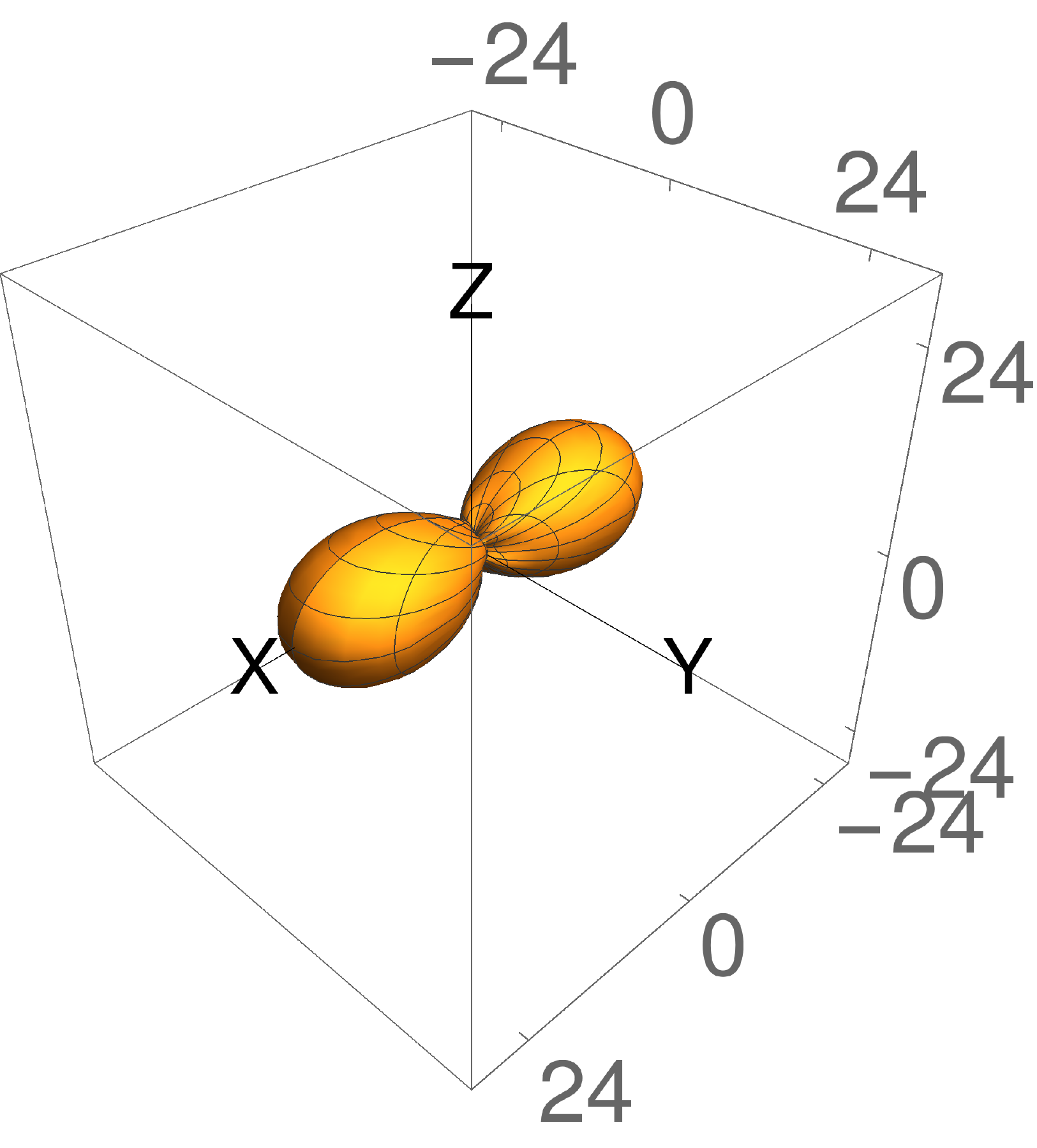}\label{subfig:Li}}
(c) \subfigure{\includegraphics[width=0.21\linewidth]{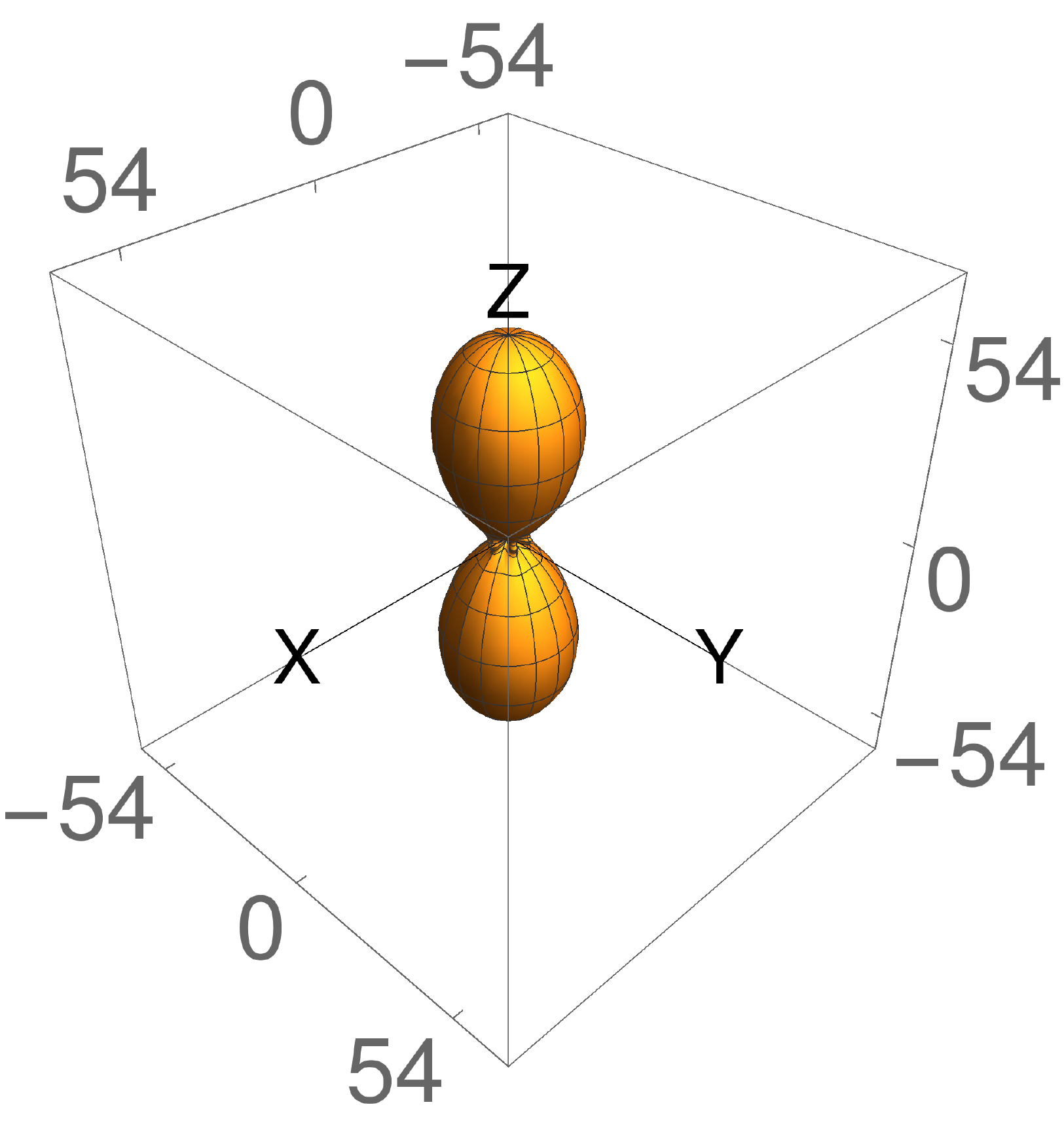}\label{subfig:Shor}}
(d) \subfigure{\includegraphics[width=0.21\linewidth]{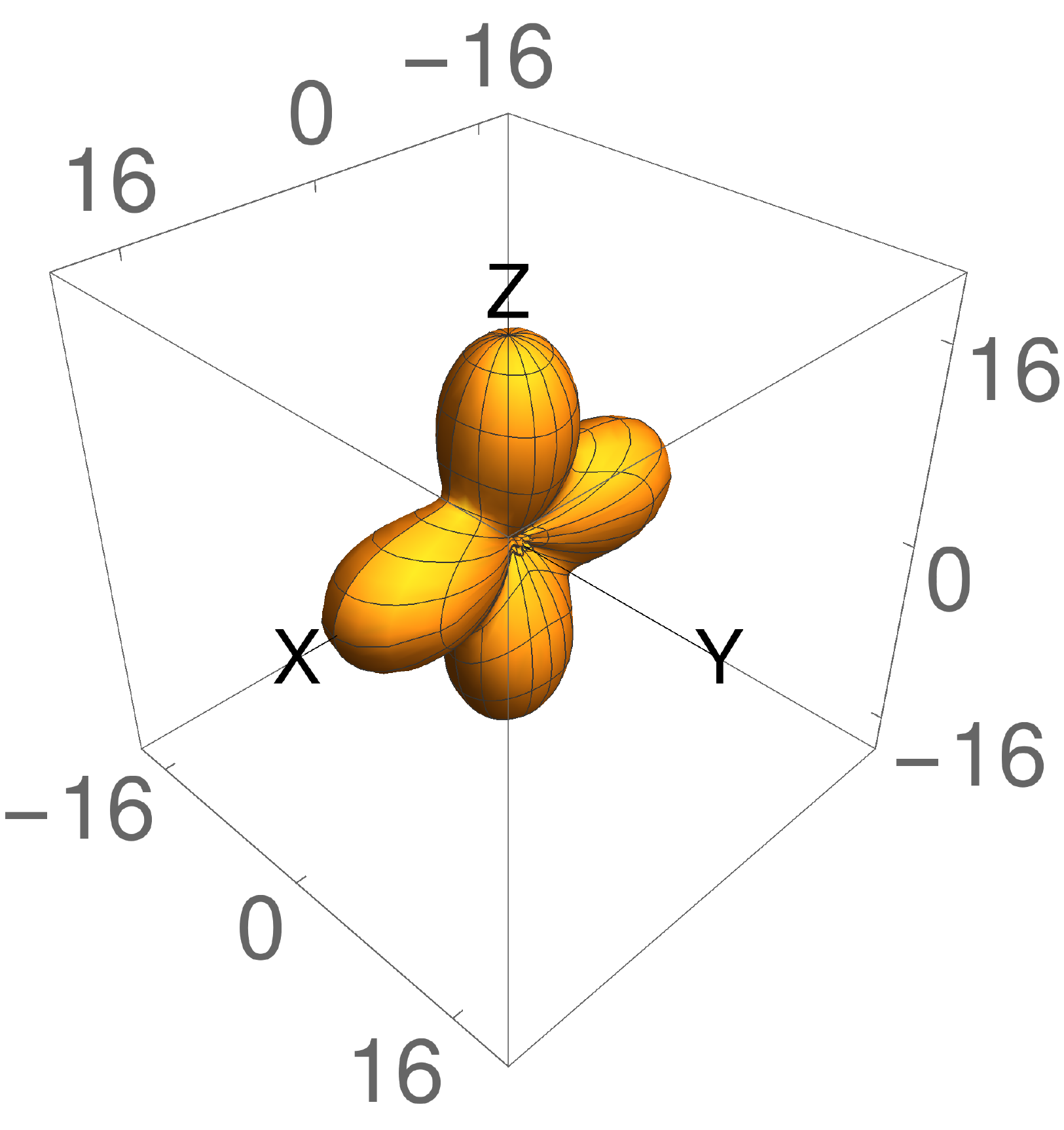}\label{subfig:surface}}
\caption{3D spherical plots of the ratio $D_{\diamond}'/D_{\diamond}^d$ against the rotation axis of a single qubit unitary error for various error correcting codes using the symmetric decoder. 
The rotation angle for the unitary error was chosen to be $\theta=0.01$. 
The codes are (a) the Steane code, (b) the Bare $[7,1,3]$ code~\cite{Li2017}, (c) the $[9,1,3]$ Shor code, and (d) the $[9,1,3]$ surface code.
Different codes have different axes along which the post-error correction diamond norm error is more favorable than others. }
\label{fig:balloon}
\end{figure}

%--------------------------------------------------------------------------------------------------------------------%
\section{Concatenated code thresholds for joint unitary and dephasing errors}\label{sec:concatcodeperformance}
%--------------------------------------------------------------------------------------------------------------------%

We now return to the noise model of \cref{eq:noisemodel}, which combines $Z$-axis rotation with dephasing. 
Whereas previously we have always used the symmetric, minimum distance decoder, we now specialize to a $Z$-only decoder that finds the minimum distance correction among all Paulis consisting only of $Z$ and $I$. 
This is sensible since for these noise channels only $Z$-type errors are supported. 

For any given code we can regard the ideal error correction as a map on the noise channel. 
For most of the codes we have investigated we find that the form of the noise is unchanged and there exist some $(x',y')$ such that the error correction implements the map $(x,y)\rightarrow (x',y')$. 
This type of calculation was also performed by \citet{Gutierrez2016}, but in contrast we don't need the full process matrix since our noise model is restricted to \cref{eq:noisemodel} and so is a function solely of $x$ and $y$. 
For example, for the Steane code we find 
\begin{align}
\label{eq:Steane}
    \begin{split}
        x' &= 21 x^2-98 x^3+210 x^4-252 x^5+168 x^6-48 x^7\\
        & +42 y^4-252 x y^4+504 x^2 y^4-336 x^3 y^4
    \end{split}\\
    y' &= 14 y^3-168 x y^3+504 x^2 y^3-672 x^3 y^3+336 x^4 y^3+48 y^7,
\end{align}
so to lowest order, under unitary noise only, $p' \approx 63\theta^4$ and $\theta'\approx 14\theta^3$, which is consistent with \cref{thm:theorem}. 

This procedure can handle rather larger codes. 
As a further example, consider the $[16,1,4]$ surface code described in~\cite{nickerson2016error}. 
We consider the noise model of \cref{eq:noisemodel} and perform correction by measuring $X$-stabilizers only. 
With the Z-only decoder, we find $y' = 0$ and 
\begin{align}
    \begin{split}
        x' ={}& 32 x^2 - 188 x^3 + 484 x^4 - 500 x^5 - 612 x^6 + 3136 x^7 - 5680 x^8 + 6080 x^9\\
        &- 4032 x^{10} + 1536 x^{11} - 256 x^{12} + 28 x y^2 - 222 x^2 y^2 + 756 x^3 y^2 - 1378 x^4 y^2\\
        &+ 1104 x^5 y^2 + 752 x^6 y^2 - 2880 x^7 y^2 + 3120 x^8 y^2 - 1600 x^9 y^2 + 320 x^{10} y^2 \\
        &+ 12 y^4 - 96 x y^4 + 208 x^2 y^4 + 416 x^3 y^4 - 3088 x^4 y^4 + 7040 x^5 y^4 - 8320 x^6 y^4\\
        &+ 5120 x^7 y^4 - 1280 x^8 y^4 - 10 y^6 - 32 x y^6 + 704 x^2 y^6 - 2880 x^3 y^6 + 5280 x^4 y^6\\
        &- 4608 x^5 y^6 + 1536 x^6 y^6 - 48 y^8 + 384 x y^8 - 1152 x^2 y^8 + 1536 x^3 y^8\\
        &- 768 x^4 y^8 - 16 y^{10} + 64 x y^{10} - 64 x^2 y^{10}
    \end{split}
\end{align}
It is striking that for this even distance error correcting code we reproduce the behavior for even distance repetition codes where we also found $y'=0$. 
This confirms that the behavior seen there also arises for other more interesting quantum error correcting codes, although we still do not know if this is true for every even distance code in general. 

As discussed by \citet{Rahn2002} the effective channel can be used to study the performance of concatenated codes. 
The effective channel can be regarded as a map on noise processes, and iterating this map corresponds to concatenated error correction. 
As always the noise map corresponds to a particular choice of decoder. 
In this case the corrections are made at each level of the code using only the syndrome at that level. 
The full syndrome of the concatenated error correcting code is \emph{not} used to determine the correction. 
Such decoders are known as \emph{hard decoders} and it has been shown that soft decoders, that do consider the full syndrome, can lead to improved performance~\cite{Poulin2006}. 
Moreover simply iterating the mapping means that at each level of the code the symmetric decoder is used. 
It has also been shown that there are hard decoders that can improve on this~\cite{Chamberland2017}. 
(The improvements in~\cite{Chamberland2017} typically depend on knowledge of the noise model, whereas the concatenated symmetric or $Z$-only decoders can be specified independently of the precise noise model.) 
Nevertheless it is of interest to study the threshold of the concatenated symmetric decoder under unitary noise. 
For example it was shown in~\cite{Greenbaum2016} that after two levels of concatenation through a repetition code a unitary error of the form $\exp(-i\theta Z)$ results in an effective noise process dominated by dephasing.

\begin{figure}[t!]
(a) \subfigure{\includegraphics[trim=15mm 0mm 0mm 0mm,clip,width=0.50\columnwidth]{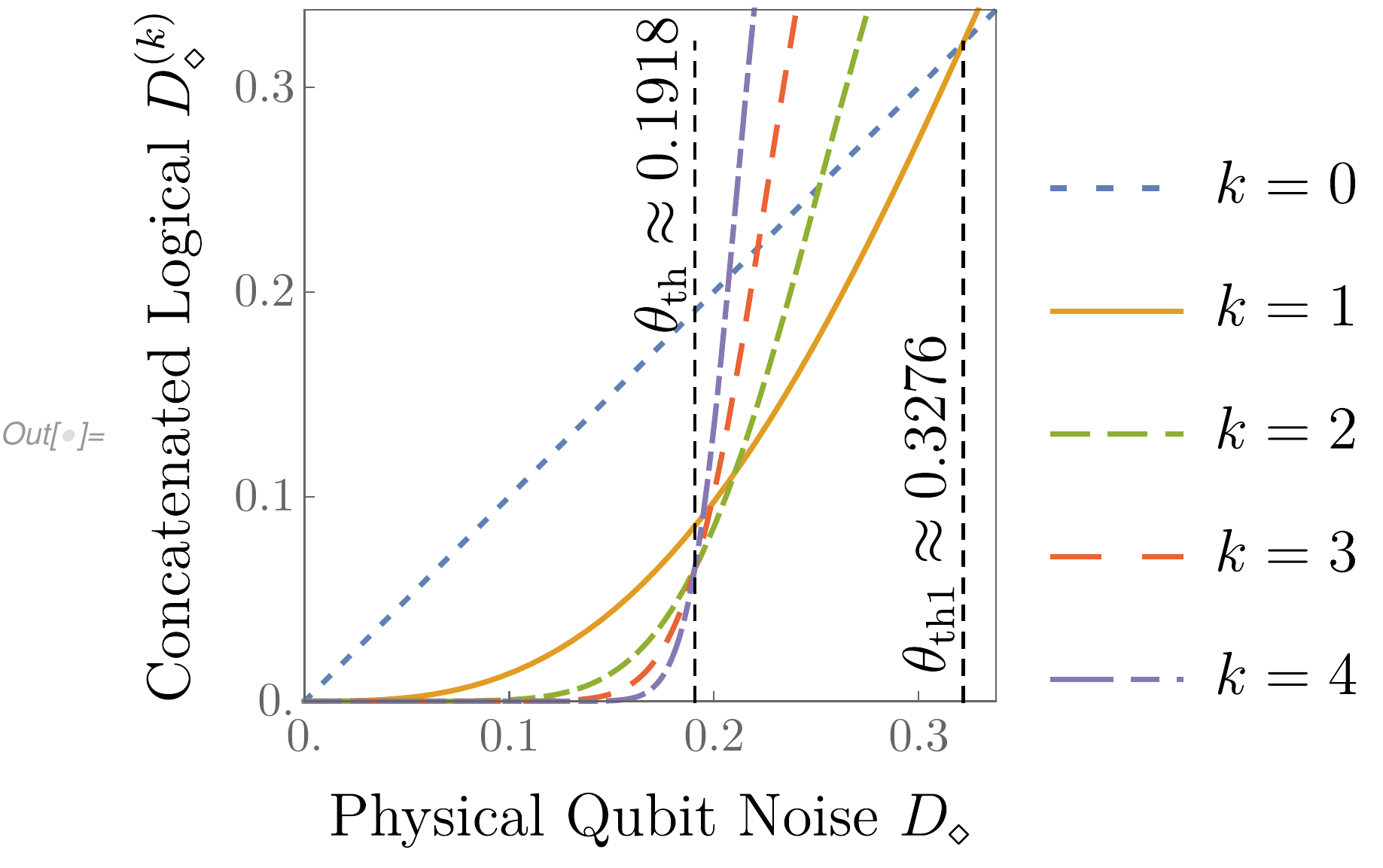}\label{subfig:SteaneCat}}
(b) \subfigure{\includegraphics[width=0.36\columnwidth]{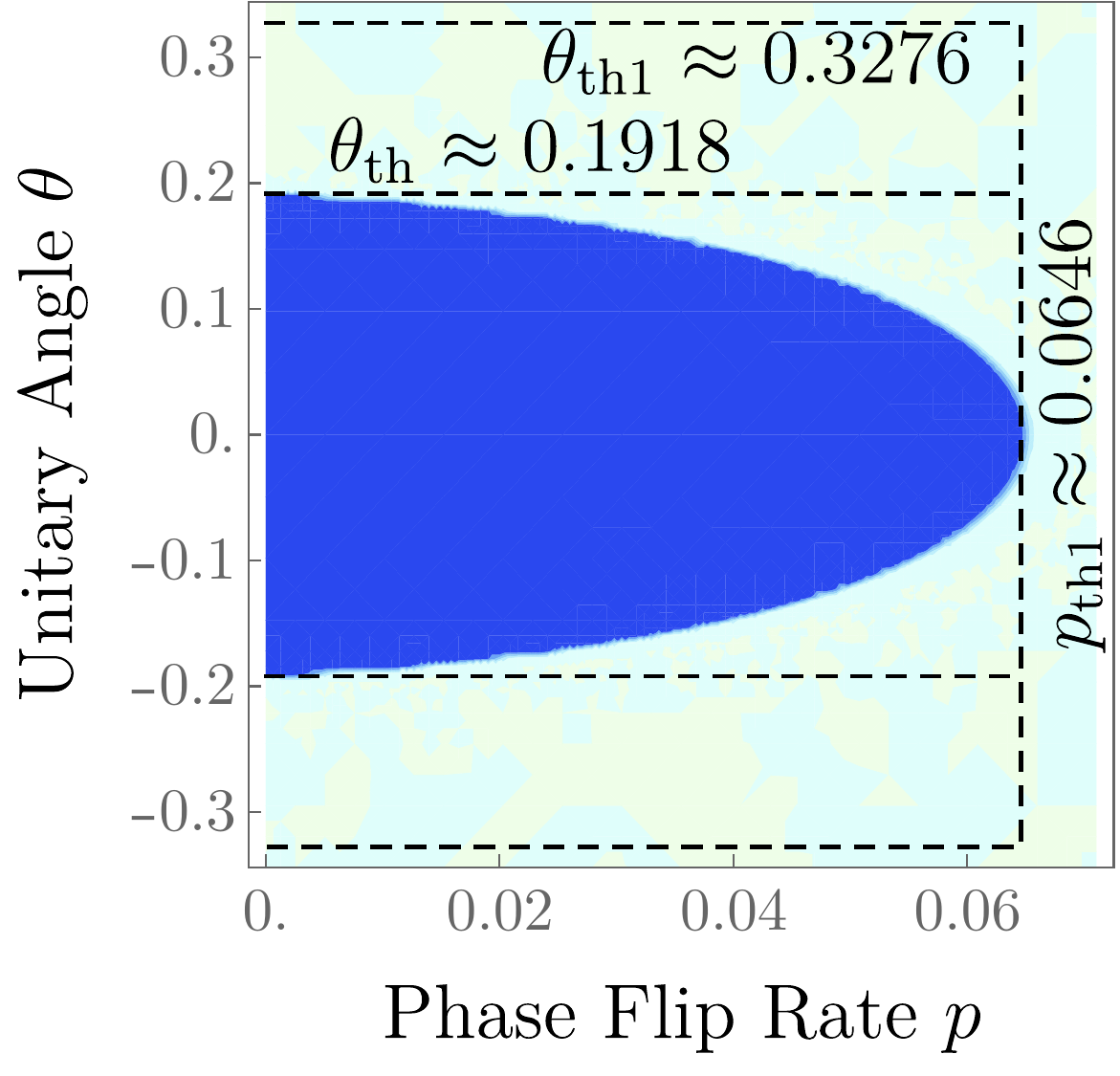}\label{subfig:SteaneAttractor}}
\caption{(a) Plots of the diamond distance $D_{\diamond}^{(k)}$ of the effective channel for the $k$-th concatenation of the Steane code over the diamond distance $D_{\diamond}$ of the original unitary $X$ noise channel for $k=0,1,2,3,4,5$. 
The true threshold and level-1 pseudothresholds are indicated by dashed lines. 
(b) Thresholds for concatenation of the Steane code under unitary bit flip noise as a basin of attraction in $p$ and $\theta$. 
Thresholds and level-1 pseudothresholds for coherent and incoherent noise are labeled with dashed lines.}
\end{figure}

Looking at the Steane code again, the mapping in \cref{eq:Steane} can be concatenated to further reduce the noise in terms of the diamond distance for unitary noise as in \cref{subfig:SteaneCat}. 
The level-1 pseudothreshold is 
\begin{align}
    \theta_{\text{th1}}\approx 0.3276 \qquad D_{\diamond\,\text{th1}} \approx 0.3218.
\end{align}
For comparison, under incoherent dephasing noise alone, the level-1 pseudothreshold for concatenation of the Steane code is 
\begin{align}
    p_{\text{th1}}\approx 0.0646 \qquad D_{\diamond\,\text{th1}}\approx 0.0646.
\end{align}
As can be seen from \cref{subfig:SteaneAttractor}, this is also equal to the true threshold. 
This is because the logical noise model is exactly of the same form at each level of concatenation. 
By contrast, the true threshold for unitary $Z$ rotation under concatenation is 
\begin{align}
    \theta_{\text{th}} \approx 0.1918 \qquad D_{\diamond\text{th}} \approx 0.1906,
\end{align}
which is significantly lower than the level-1 pseudothreshold. 
We can look at the behavior over the full range of parameters to find the region for which the noise is below threshold and can be reduced to an arbitrarily small value by concatenating the code. 
This is done by considering the iterative map as a dynamical system and finding the basin of attraction, as indicated in \cref{subfig:SteaneAttractor}. 

We can repeat these calculations for a range of codes, as before. 
The results are summarized in \cref{tab:summary}. 
It is worth noting that the true threshold for the two extremal noise models of pure unitary $Z$ rotation error or pure dephasing error exhibit a large gap in each of the cases we examine, and perhaps surprisingly the threshold for coherent noise is higher in every case.

\begin{table*}[t]
\centering
\begin{tabular}{rlcccccc@{}}
	\toprule
	\multicolumn{2}{c}{Code} & \multicolumn{2}{c}{Incoherent} & \multicolumn{4}{c}{Coherent} \\ 
		 \cmidrule(rl){1-2}  \cmidrule(rl){3-4} \cmidrule(rl){5-8}
	$[n,k,d]$ & name & $D_{\diamond}(p_{\text{th}})$ & $r(p_{\text{th}})$ & $D_{\diamond}(\theta_{\text{th}})$  & $r(\theta_{\text{th}})$ & $D_{\diamond}(\theta_{\text{th1}})$ & $r(\theta_{\text{th1}})$  \\ 
	\midrule
    $[5,1,3]$	& 5-qubit	& 0.5000	& 0.3333 & 0.7071	& 0.3333 & 0.8284 & 0.4575 \\
    $[7,1,3]$	& Steane	& 0.0646	& 0.0431 & 0.1906	& 0.0242 & 0.3218 & 0.0690 \\
    $[7,1,3]$	& Bare	& 0.5000	& 0.3333 & 0.7077	& 0.3333 & 0.8761 & 0.5117 \\
    $[9,1,3]$	& Shor	& 0.0499	& 0.0333 & 0.1177	& 0.0092 & 0.1477 & 0.0145 \\
    $[9,1,3]$	& Surface	& 0.0753	& 0.0502 & 0.2053	& 0.0281 & 0.3085 & 0.0634 \\
    $[16,1,4]$	& Surface	& 0.0395	& 0.0263 & 0.1613	& 0.0173 & 0.3180 & 0.0674 \\ 
	\bottomrule
\end{tabular}
\caption{Concatenated code thresholds in terms of physical diamond norm error and \acdedit{and average fidelity error} for unitary $Z$ noise and incoherent dephasing using minimum-weight $Z$ decoders. 
In each case, the threshold diamond norm error exhibits two important effects. 
First, the level-1 pseudothreshold differs by a significant amount from the true threshold only for the case of unitary noise. 
Second, the threshold is substantially higher in the case of unitary errors compared to incoherent errors. \acdedit{Notice however that when the thresholds are expressed in terms of the average fidelity error then the threshold for unitary error is less than or equal to the threshold for a dephasing error. }}
\label{tab:summary}
\end{table*}

%--------------------------------------------------------------------------------------------------------------------%
\section{Conclusion}
%--------------------------------------------------------------------------------------------------------------------%

We have studied the performance of quantum error correction against coherent noise, and coherent noise together with dephasing. 
We have proven that the logical effective channel corrects coherent errors much more effectively than incoherent errors, in the precise sense of \cref{thm:theorem}. 
We have also shown that a natural family of decoders exhibits this effect for a wide range of codes both in the single-level coding regime, and for concatenated codes. 
The thresholds that we calculate suggest that quantum error correction performs better \acdedit{than expected} when the errors are coherent, at least in the sense that this seems to lead to higher thresholds \acdedit{when these thresholds are expressed in terms of diamond norm errors}. 

There are many open questions suggested by this work, but the most interesting is to extend this analysis to the context of fault tolerant quantum computation. 
It would be extremely interesting if the increased thresholds that we observed for ideal error correction continued to hold for fault tolerant circuits. 
This would in turn strongly suggest that a sharper analysis of the threshold is needed to get accurate estimates of the threshold when coherent noise is taken into account. 
It would also be especially interesting to investigate what further improvements could be obtained for these channels using optimal Pauli recovery channels instead of minimum distance decoding, or other structured near-optimal recoveries~\cite{Fletcher2008}. 

\begin{acknowledgments}
This work was supported by the Australian Research Council through the Centre of Excellence in Engineered Quantum Systems (CE110001013 and CE170100009), and by the US Army Research Office grant numbers W911NF-14-1-0098 and W911NF-14-1-0103. 
\end{acknowledgments}

%--------------------------------------------------------------------------------------------------------------------%
\section*{Appendix}
%--------------------------------------------------------------------------------------------------------------------%

In this appendix we will collect some formulas from the stabilizer formalism that are standard (see for example \cite{Nielsen2000}) but the discussion here will fix notation that will be required in the main text. 

We will write every Pauli matrix on $n$ qubits in terms of two length $n$ binary vectors  $\vec{b}_x$ and $\vec{b}_z$. 
Ones in this vector will correspond to a Pauli matrix acting on the corresponding qubit. 
The {\it support} of the vector $\vec{b}_x$ will be the set of qubits for which the corresponding entries of $\vec{b}_x$ are equal to one. 
The size of the support will be $|\vec{b}_x|$. 
We will also define the binary vector $\vec{b}_x.\vec{b}_z$ which is the bitwise product of the two vectors, and has ones on the intersection of the supports of $\vec{b}_x$ and $\vec{b}_z$. 
These are the qubits on which the corresponding Pauli matrix acts with $Y$. 
Recalling that $XZ=-iY$ we have that a general Pauli matrix can be written
\begin{equation}
P_{\vec{b}}=i^{|\vec{b}_x.\vec{b}_z|}X^{\otimes \vec{b}_x}Z^{\otimes \vec{b}_z}
\end{equation}
and it is clear that the Pauli matrices are in one-to-one correspondence to the $4^n$ binary vectors $\vec{b}_x,\vec{b}_z$. 
Note that the Pauli matrices $P_{\vec{b}}$ so defined are the products of Pauli's with a co-efficient $+1$ and are therefore Hermitian. 
This Pauli has weight $w=|\vec{b}_x|+|\vec{b}_z|-|\vec{b}_x.\vec{b}_z|$, there are $m_x=|\vec{b}_x|-|\vec{b}_x.\vec{b}_z|$ $X$ operators, $m_z=|\vec{b}_z|-|\vec{b}_x.\vec{b}_z|$ $Z$ operators and $m_y=|\vec{b}_x.\vec{b}_z|$ $Y$ operators in this product, so that $m_x+m_y+m_z=w$.

We will imagine that $\vec{b}_x$ and $\vec{b}_z$ are row vectors, and so we can use the length $2n$ row vector $\vec{b}=(\vec{b}_x,\vec{b}_z)$ to denote a general Pauli matrix. 
We can define a $2n$-by-$2n$ symplectic matrix
\begin{equation}
\Lambda =
\left[
\begin{array}{cc}
 0 &  I    \\
- I &  0  
\end{array}
\right]
\end{equation}
We can define a symplectic inner product as follows $\vec{b}\Lambda\vec{b}'^T=|\vec{b}_x. \vec{b}'_z|-|\vec{b}_z\cdot\vec{b}'_x|$. 
The two elements $\vec{b}$ and $\vec{b}'$ commute if and only if $(-1)^{\vec{b}\Lambda\vec{b}'^T}=1$. 
While we have defined the symplectic inner product as a number, we will only ever use its value modulo 4. 
(Notice that there is a sign that is different compared to Nielsen and Chuang~\cite{Nielsen2000} and we don't evaluate inner products modulo 2, but rather use the conventional inner product.)

We can therefore compute products straightforwardly
\begin{eqnarray}
P_{\vec{b}}P_{\vec{b}'}&=&i^{|\vec{b}_x.\vec{b}_z|}i^{|\vec{b}'_x.\vec{b}'_z|}X^{\otimes \vec{b}_x}Z^{\otimes \vec{b}_z} X^{\otimes \vec{b}'_x}Z^{\otimes \vec{b}'_z}\\  
&=&(-1)^{|\vec{b}'_x. \vec{b}_z|} i^{|\vec{b}_x.\vec{b}_z|}i^{|\vec{b}'_x.\vec{b}'_z|}X^{\otimes \vec{b}_x}X^{\otimes \vec{b}'_x}Z^{\otimes \vec{b}_z} Z^{\otimes \vec{b}'_z} \\ 
&=&(-1)^{|\vec{b}'_x\cdot \vec{b}_z|}
(-i)^{|\vec{b}'_x.\vec{b}_z|}(-i)^{|\vec{b}_x.\vec{b}'_z|} P_{\vec{b}+\vec{b}'}
=  (-i)^{|\vec{b}_x.\vec{b}'_z|-|\vec{b}'_x.\vec{b}_z|} P_{\vec{b}+\vec{b}'} =
(-i)^{\vec{b}\Lambda\vec{b}'^T}P_{\vec{b}+\vec{b}'} \label{eq:pauliprod}
\end{eqnarray}
The second line makes use of the commutation formula we have just derived. 
In the final line we use $(\vec{b}_x+\vec{b}'_x).(\vec{b}_z+\vec{b}'_z)= \vec{b}_x.\vec{b}_z+\vec{b}'_x.\vec{b}'_z+\vec{b}'_x.\vec{b}_z+\vec{b}_x.\vec{b}'_z$. 
As claimed the symplectic inner product only needs to be evaluated modulo 4. 
Since two Paulis commute if and only if the symplectic inner product is $0\mod 2$, when we multiply two commuting Paulis the phase factor is $\pm1$, conversely when multiplying two anticommuting Paulis the phase factor is $\pm i$.

Consider now a stabilizer code. 
The stabilizer generators are signed products of Pauli matrices $\pm P_{\vec{g}_i}$  where the binary vectors $\vec{g}_i$ satisfy certain constraints, for example that the stabilizer generators all commute. 
Since the generators are associated with $n-k$ row vectors $\vec{g}_i$ we can collect these vectors into a $(n-k)$-by-$2n$ binary matrix $G$ whose rows are the $g_i$. 
For any given error $E=P_{\vec{b}}$ the \emph{syndrome} is a length $n-k$ binary column vector $\vec{s}$ given by 
\begin{equation}
\vec{s}(\vec{b}) = G\Lambda \vec{b}^T \ \mod 2
\end{equation}
A general member of the stabilizer group $S_{\vec{t}}$ is described by the length $n-k$ binary vector $\vec{t}$ such that
\begin{equation}
S_{\vec{t}} =\pm \prod_{i=1}^{n-k} P_{\vec{g}_i}^{t_i} = \pm P_{\vec{t}G}.
\end{equation}
The first factor of $\pm 1$ is there to account for the signs of the stabilizer generators in product. 
In examples of interest to us, this will always be one. 
The phase factor in the final expression arises from iterating equation (\ref{eq:pauliprod}) and is real because all the stabilizer generators commute. 
It has a definite value that is suppressed in the interests of lightening the notation. 

The code has $k$ logical $X$ operators $\bar{X}_i= P_{\vec{x}_i}$ and $k$ logical $Z$ operators $\bar{Z}_i= P_{\vec{z}_i}$. 
Since the X operators are specified by $k$ binary row vectors of length $2n$, we can specify the logical $X$ operators by the $k$-by-$2n$ binary matrix $X$ whose rows are the $\vec{x}_i$, and likewise a corresponding matrix $Z$ for the logical $Z$ matrices.

We can describe products of logical Pauli operators $L_{\vec{l}}$  by a length $2k$ binary vector $\vec{l}=(\vec{l}_x,\vec{l}_z)$, in analogy to the Pauli matrices as follows
\begin{equation}
L_{\vec{l}}=(-i)^{|\vec{l}_x.\vec{l}_z|}\prod\bar{X}_i^{l_{xi}}\prod
\bar{Z}_j^{l_{zj}} = \pm (-i)^{|\vec{l}_x.\vec{l}_z|} P_{\vec{l}_xX}P_{\vec{l}_zZ}
\end{equation}
The phase factor $\pm1$ arises because the product of logical X's is a product of commuting operators and likewise the product of logical Z's. 
Again, it has a definite value, given by iterating equation (\ref{eq:pauliprod}), but suppressing this factor lightens the notation. 
Notice that the matrix product in the subscript results in a length $2n$ binary row vector as expected.

There are $2^{n-k}$ syndromes, and with each syndrome $\vec{s}$ we will associate the lowest weight error $E_\vec{s}$ that leads to that syndrome. 
Our decoder will reverse this error. 
Explicitly then $E_{\vec{s}}=P_{\vec{\tilde{b}}}$ where $\vec{\tilde{b}}$ is the lowest weight solution to $\vec{s}=G\Lambda \vec{\tilde{b}}^T$. 
Then it can be shown that every Pauli matrix can be written in the form $P_\vec{b}=\eta_\vec{b}E_{\vec{s}(\vec{b})}L_{\vec{l}(\vec{b})}S_{\vec{t}(\vec{ b})}$ where $\eta_\vec{b}$ is a phase factor equal to $\pm1$ or $\pm i$. 

%--------------------------------------------------------------------------------------------------------------------%
\bibliography{refs}

\end{document}